\newtheorem{theorem}{Theorem}
\newtheorem{remark}{Remark}
\newtheorem{corollary}{Corollary}
\newtheorem{assumption}{Assumption}
\begin{document}
\title{{Thermodynamic Constraints on Quantum Information Gain and Error Correction:\\A Triple Trade-Off}}
\author{Arshag Danageozian}
\email{Corresponding Author: arshag.danageozian@gmail.com}
\affiliation{Hearne  Institute  for  Theoretical  Physics,  Department  of  Physics  and  Astronomy,and  Center  for  Computation  and  Technology,  Louisiana  State  University,  Baton  Rouge,  Louisiana  70803,  USA}
\author{Francesco Buscemi}
\affiliation{Department of Mathematical Informatics, Nagoya University, Furo-cho Chikusa-ku, Nagoya, 464-8601, Japan}
\author{Mark M. Wilde}
\affiliation{Hearne  Institute  for  Theoretical  Physics,  Department  of  Physics  and  Astronomy,and  Center  for  Computation  and  Technology,  Louisiana  State  University,  Baton  Rouge,  Louisiana  70803,  USA}

\date{\today}
\begin{abstract}
Quantum error correction (QEC) is a procedure by which the quantum state of a system is protected against a known type of noise, by preemptively adding redundancy to that state. Such a procedure is commonly used in quantum computing when thermal noise is present. Interestingly, thermal noise has also been known to play a central role in quantum thermodynamics (QTD). This fact hints at the applicability of certain QTD statements in the QEC of thermal noise, which has been discussed previously in the context of Maxwell's demon. In this article, we view QEC as a quantum heat engine with a feedback controller (i.e., a demon). We derive an upper bound on the measurement heat dissipated during the error-identification stage in terms of the Groenewold information gain, thereby providing the latter with a physical meaning also when it is negative. Further, we derive the second law of thermodynamics in the context of this QEC engine, operating with general quantum measurements. Finally, we show that, under a set of physically motivated assumptions, this leads to a fundamental triple trade-off relation, which implies a trade-off between the maximum achievable fidelity of QEC and the super-Carnot efficiency that heat engines with feedback controllers have been known to possess. A similar trade-off relation occurs for the thermodynamic efficiency of the QEC engine and the efficacy of the quantum measurement used for error identification.
\end{abstract}

\maketitle
\tableofcontents

\section{Introduction}

The link between quantum thermodynamics (QTD) and information theory has been flourishing over many decades due to two important factors: the miniaturization of controllable systems, which raised the importance of a detailed accounting of dissipated energy in quantum devices, and fruitful thoughts on an old problem known as Maxwell's demon. The latter goes back to 1867, when Maxwell first suggested a challenge to the second law of thermodynamics, whereby an intelligent mechanism (i.e., a demon) is used to reduce the entropy of a gas without performing work on the system, seemingly violating the second law \cite{jaynes1957information, capek2005challenges, maruyama2009colloquium}. However, it was Szilard in 1929 who first made the connection between this entity and information theory. Assuming the correctness of the second law, Szilard showed that work extraction from a single heat reservoir at temperature $T$, using Maxwell's demon, requires the production of an amount of entropy equal to $k_{B}\ln 2$ per gas particle. In 1961, Landauer showed that the origin of this entropy production is due to the resetting of the memory of the demon, which requires dumping an amount of heat equal to $k_{B}T\ln 2$ into the thermal reservoir of temperature $T$. This has become the celebrated ``Landauer erasure principle,'' which has been extended to various scenarios in quantum information and quantum computing \cite{GPW05,bedingham2016thermodynamic, abdelkhalek2016fundamental,BBMW18,BBMW18pra, deffner2021energetic}.

One interesting area in which Landauer's principle has been applied is quantum error correction (QEC). The latter is of importance in quantum computation \cite{shor1995scheme,nielsen2002quantum,lidar_brun_2013}, where the state of the computational system (usually a collection of qubits) is corrected by first encoding it into a larger physical system (by adding redundant parts to the computational system) and then effectively reversing the action of the environmental noise using a recovery channel \cite{BK02,Ng_2010,tyson10,Junge2018, kwon2021reversing}. Hence, entropic and thermodynamic analyses have been conducted \cite{vedral2000landauer, korepin2002thermodynamic, cafaro2014entropic} to ensure that the effective reverse dynamics in QEC does not violate the second law of thermodynamics. However, the question of whether the laws of thermodynamics can add upon the existing QEC literature has been an open problem.

One major area where the field of thermodynamics has been contributing to QEC is in the theory of quantum measurements. In the latter, the ``information'' derived from a measurement can be quantified by the average entropy reduction of the measured system, which is also known as the Groeneworld information gain \cite{groenewold1971problem, ozawa-groen}. Therefore, this quantity can serve as a measure of information gain during the error-detection stage of QEC. The Groenewold information gain also has a thermodynamic meaning in terms of dissipated measurement heat \cite{jacobs2009second, buscemi2016approximate}, but this has hitherto been contingent on it taking non-negative values.

A richer and more direct link between QTD and QEC has been known since Ref.~\cite{nielsen1998information}, where the latter can be understood as a heat engine with a feedback controller \cite{landi2020thermodynamic} (see Figure~\ref{fig:1}). This link suggests that the laws of thermodynamics can be extended to the QEC setting. However, it is known that the second law for engines with feedback controllers varies slightly from that of typical Carnot engines. For instance, a critical thermodynamic feature of the former is that they can accomplish efficiencies higher than the Carnot efficiency  \cite{sagawa2008second}. Namely, Carnot's formulation of the second law is modified for such engines, since they have the advantage of measurement and feedback that typical Carnot engines do not. Of course, such a modified statement about possible super-Carnot efficiencies has to be stated very carefully in order to avoid some of the historical inconsistencies mentioned above; this has been done with some success in previous literature \cite{landi2020thermodynamic, sagawa2008second}. However, one has to be certain to erase the memory of Maxwell's demon (a classical register), as well as systematically consider the measurement heat during the feedback process \cite{abdelkhalek2016fundamental}. 

In this article, we expand the link between QTD and QEC by deriving thermodynamic constraints on the tasks of information gain from a measurement and QEC. These results can be summarized as follows:
\begin{enumerate}
    \item \textbf{Measurement heat is bounded from above by Groenewold information gain:} We show that, if the measurement apparatus is initialized in a thermal state (e.g., see \cite{allahverdyan2013understanding, allahverdyan2017sub}), then the heat dissipated into it (from the measured system) during a general quantum measurement is bounded from above by the Groenewold information gain. This extends the existing thermodynamic interpretation of the Groenewold information gain to cases in which it is negative~\cite{ozawa-groen}. Previously, only efficient quantum measurements have been discussed, for which the Groenewold information gain is always non-negative and the usual interpretation applies \cite{sagawa2008second, jacobs2009second, buscemi2008global}.
    
    \item \textbf{The second law of thermodynamics constrains quantum error correction:} We derive the second-law inequality in the context of QEC, which is accomplished in a more general setting than previously achieved in the literature \cite{sagawa2008second, landi2020thermodynamic}. More precisely, we allow for an arbitrary initial state of the system (which need not be a collection of qubits), arbitrary thermal noise (without an i.i.d.~assumption), and arbitrary generalized measurements at the error-detection stage. Achieving such a generalization is important to capture the full set of limitations that the laws of thermodynamics impose on QEC in a general physical setting.
    
    \item \textbf{Efficiency-fidelity trade-off in QEC engines:} We show that, under a set of physically motivated assumptions, the above thermodynamic limitations on QEC can be expressed in terms of a triple trade-off relation between the maximum error-correction fidelity of a QEC engine (described by Figure~\ref{fig:1}), the thermodynamic efficiency, and the efficacy of the quantum measurement at the error-detection stage of QEC. This can be broken up into two trade-off relations that can be summarized as follows: \textit{(i)} the larger the thermodynamic efficiency beyond the Carnot limit, the smaller the maximum achievable error-correcting fidelity of a QEC engine and \textit{(ii)} the larger the efficacy of the quantum measurement conducted at the error-detection stage, the smaller the thermodynamic efficiency below the Carnot limit. As an additional result, we clarify the conditions under which one can arrive at a super-Carnot efficiency regime, improving upon previous literature \cite{sagawa2008second}.
\end{enumerate}

\begin{figure*}
    \centering
    \includegraphics[width=\linewidth]{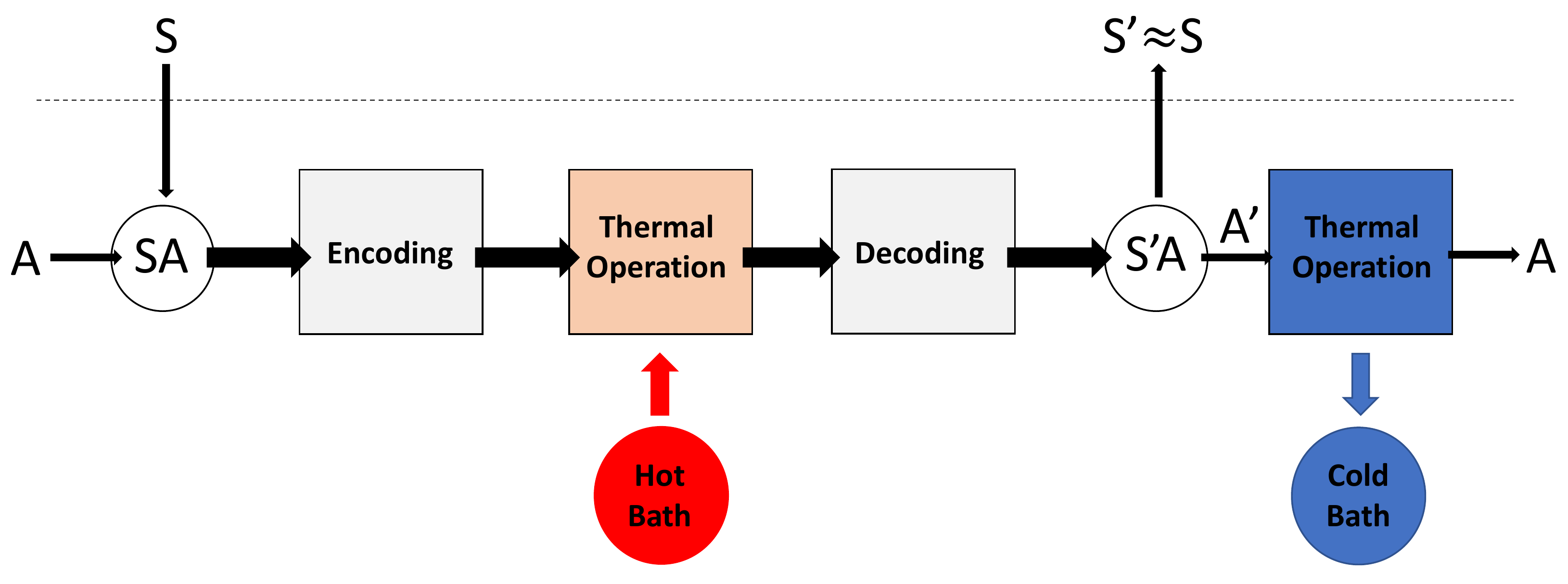}
    \caption{A diagrammatic representation of QEC as a heat engine with feedback controller, with time flow going from left to right). 
    From the left, we have the encoding of the system $S$ using the ancillary system $A$ for redundancy. Next, the encoded system $SA$ interacts with a hot bath of temperature $T_{h}$ (representing the noisy channel), followed by a decoding stage, where the errors in $SA$ are detected and an attempt to correct them is made via a unitary feedback process. Finally, the ancillary system is returned (recycled) to its original state with the help of a cold bath of temperature $T_{c}$, in preparation for a future use.}
    \label{fig:1}
\end{figure*}

\section{Preliminaries}
\label{sec:preliminaries}

In what follows, we review various information-theoretic and thermodynamic quantities that will be used in the rest of the article.

Let $\mathcal{H}$ denote a Hilbert space, and let $\mathcal{L(H)}$ be the set of bounded linear operators acting on~$\mathcal{H}$. Let $\mathcal{L_{+}(H)}$ denote the set of positive semi-definite operators acting on~$\mathcal{H}$. The state of a physical system is described by a density matrix $\rho \in \mathcal{D}(\mathcal{H})$, where $\mathcal{D}(\mathcal{H})$ is the subset of linear operators $\mathcal{L}(\mathcal{H})$ that are positive semi-definite and have unit trace. We denote by $H(A)_{\rho}\coloneqq -\operatorname{Tr}[\rho^{A} \ln \rho^{A}]$ the von Neumann entropy of the density matrix $\rho^{A} \in \mathcal{D}(\mathcal{H}^{A})$ of a system $A$. For a bipartite system $AB$ described by a density matrix $\sigma^{AB} \in \mathcal{D}(\mathcal{H}^{A}\otimes \mathcal{H}^{B})$, the quantum (von Neumann) conditional entropy and mutual information are denoted by $H(A|B)_{\sigma}\coloneqq H(AB)_{\sigma} - H(B)_{\sigma}$ and $I(A\!:\!B)_{\sigma} \coloneqq H(A)_{\sigma} - H(A|B)_{\sigma}$, respectively. Given two operators $M,N \in \mathcal{L_{+}(H)}$, we denote their quantum relative entropy by $D(M \Vert N)$, which is non-symmetric and defined as
\begin{equation}
    D(M\Vert N) \coloneqq \mathrm{Tr}\left[M\ln M\right]-\mathrm{Tr}\left[M\ln N\right] \;,
\end{equation}
if supp($M$) $\subseteq$ supp($N$), and $+\infty$ otherwise, where supp($\cdot$) denotes the support (i.e., the orthogonal complement of the kernel) of an operator. 

A linear map from $\mathcal{L}(\mathcal{H}^{A})$ to  $\mathcal{L}(\mathcal{H}^{B})$ is denoted by $\mathcal{N}^{A\rightarrow B}:\mathcal{L}(\mathcal{H}^{A})\rightarrow \mathcal{L}(\mathcal{H}^{B})$. We say that a linear map is positive if  $\mathcal{N}^{A\rightarrow B}(M^{A}) \in \mathcal{L}_{+}(\mathcal{H}^{B})$ for all $M^{A} \in \mathcal{L}_{+}(\mathcal{H}^{A})$, and trace preserving (TP) if $\mathrm{Tr}[\mathcal{N}^{A\rightarrow B}(M^{A})]=\mathrm{Tr}[M^{A}]$ for all $M^{A} \in \mathcal{L}(\mathcal{H}^{A})$. Similarly, we say that $\mathcal{N}^{A\rightarrow B}$ is trace non-increasing (non-decreasing) if $\mathrm{Tr}[\mathcal{N}^{A\rightarrow B}(M^{A})] \leq \left(\ge\right)\, \mathrm{Tr}[M^{A}]$ for all $M^{A} \in \mathcal{L}_{+}(\mathcal{H}^{A})$. A linear map $\mathcal{N}^{A\rightarrow B}$ is unital if it maps the unit operator in $\mathcal{L}(\mathcal{H}^{A})$ to the unit operator in $\mathcal{L}(\mathcal{H}^{B})$, i.e.,  $\mathcal{N}^{A\rightarrow B}(I^{A})=I^{B}$. Similarly, a linear map $\mathcal{N}^{A\rightarrow B}$ is subunital (superunital) if $\mathcal{N}^{A\rightarrow B}(I^{A}) \leq (\ge) I^{B}$. We define the adjoint map $\left( \mathcal{N}^{A\rightarrow B} \right)^{\dagger}$ of any linear map $\mathcal{N}^{A\rightarrow B}$ through the relation $\mathrm{Tr}[M\mathcal{N}^{A\rightarrow B}(N)]=\mathrm{Tr}[(\mathcal{N}^{A\rightarrow B})^{\dagger}(M)N]$ for all $N \in \mathcal{L}(\mathcal{H}^{A})$ and $M \in \mathcal{L}(\mathcal{H}^{B})$. A linear map is unital if and only if the corresponding adjoint map is TP. Similarly, a linear map is subunital (superunital) if and only if the corresponding adjoint map is trace non-increasing (non-decreasing). A positive linear map $\mathcal{N}^{A\rightarrow B}$ is called completely positive (CP) if for every Hilbert space $\mathcal{H}^{R}$, the map $\mathcal{I}^{R}\otimes \mathcal{N}^{A\rightarrow B}$ is positive, where $\mathcal{I}^{R}$ is the identity map acting on $\mathcal{L}(\mathcal{H}^{R})$. It is well known that CP maps have a Kraus decomposition.

For any CP linear map $\mathcal{N}^{A\rightarrow B}$ acting on a density matrix $\rho^{A} \in \mathcal{D}(\mathcal{H}^{A})$, we define the \textit{efficacy} $\mathscr{E}$ \cite{albash2013fluctuation, goold2015nonequilibrium, buscemi2020thermodynamic} of the linear map $\mathcal{N}^{A\rightarrow B}$ with respect to $\rho^{A}$ as 
\begin{equation}
    \mathscr{E}(\mathcal{N}^{A \rightarrow B}; \rho^{A}) \coloneqq \mathrm{Tr}\left[(\mathcal{N}^{A\rightarrow B})^{\dagger}\circ \mathcal{N}^{A\rightarrow B} (\rho^{A})\right]\;, \label{eqn:efficacy}
\end{equation}
which is a measure of how ``reversible'' $\mathcal{N}^{A\rightarrow B}$ is with respect to  $\rho^{A}$: indeed, it can be shown that the efficacy is related to the entropy change as follows \cite{buscemi2016approximate,buscemi2020thermodynamic} 
\begin{align}
H(B)_{\mathcal{N}(\rho)}-H(A)_\rho&\ge D(\rho^{A}\Vert (\mathcal{N}^{A\rightarrow B})^{\dagger}\circ \mathcal{N}^{A\rightarrow B} (\rho^{A}))\nonumber\\
&=D(\rho^{A}\Vert \tilde{\rho}^{A})-\ln \mathscr{E}(\mathcal{N}^{A \rightarrow B}; \rho^{A})\nonumber\\
&\ge -\ln \mathscr{E}(\mathcal{N}^{A \rightarrow B}; \rho^{A})\;, \label{eqn:ent-to-efficacy}
\end{align}
where
\begin{equation}
    \tilde{\rho}^{A} \coloneqq \frac{(\mathcal{N}^{A\rightarrow B})^{\dagger}\circ \mathcal{N}^{A\rightarrow B} (\rho^{A})}{\mathrm{Tr}\left[ (\mathcal{N}^{A\rightarrow B})^{\dagger}\circ \mathcal{N}^{A\rightarrow B} (\rho^{A}) \right]} \in \mathcal{D}(\mathcal{H}^{A})\;,
\end{equation}
is a valid density matrix. It is easy to see that, for efficacy to be strictly smaller (larger) than one for a fixed $\rho^{A}$, it is sufficient (but not necessary) for $\mathcal{N}^{A\rightarrow B}$ to be strictly subunital (superunital). Furthermore, due to Eq.~\eqref{eqn:ent-to-efficacy}, a subunital $\mathcal{N}^{A\rightarrow B}$ implies a positive entropy change, but a superunital $\mathcal{N}^{A\rightarrow B}$ does not necessarily imply negative entropy change. See also Ref.~\cite{das2018fundamental} for a measure called diamond distance of non-unitarity, which is similar in spirit to the efficacy. 

In our article, we need a formalism that characterises general quantum measurements, which can be thought of as a linear mapping between the pre- and post-measurement states, along with some measurement statistics. For that, we define a \textit{quantum instrument} \cite{ozawa1984quantum,wilde2011classical, hayashi2006quantum} as a set $\left\{ \mathcal{N}^{A}_{x}\right\}_{x\in \mathcal{X}}$ of completely positive trace-non-increasing linear maps such that $\mathcal{N}^{A}=\sum_{x\in \mathcal{X}}\mathcal{N}^{A}_{x}$ is CPTP. Quantum instruments describe generalized quantum measurements with measurement outcomes $x\in \mathcal{X}$. The probability of each measurement outcome $x$ is computed by $p_{X}(x)\coloneqq \mathrm{Tr}[\mathcal{N}^{A}_{x}(\rho^{A})]$ and the corresponding post-measurement state by
\begin{equation}
    \rho^{A}\rightarrow \theta^{A}_{x} \coloneqq\frac{ \mathcal{N}^{A}_{x}(\rho^{A})}{p_{X}(x)}\;.
\end{equation}
It is convenient to record the measurement outcome in an auxiliary classical register $X$ that is initially in a pure state $|0\rangle\!\langle 0|^{X}$. In this way, we can write the expected post-measurement state as a quantum-classical state:
\begin{equation}
    \theta^{AX} \equiv \mathcal{N}^{A\rightarrow AX}(\rho^{A})=\sum_{x\in \mathcal{X}}\mathcal{N}^{A}_{x}(\rho^{A})\otimes |x\rangle\!\langle x|^{X}\;, \label{eqn:quantum_instrument}
\end{equation}
where $\left\{|x\rangle^{X}\right\}_{x\in \mathcal{X}}$ is an orthogonal set of ``pointer states'' of the classical register.

Another important concept in this work is the notion of information gain of generalized quantum measurements. One such attempt in quantifying information gain from a measurement is given by the Groenewold information gain \cite{groenewold1971problem, ozawa-groen}, which is defined as the expected entropy reduction of the system due to a general measurement described by the quantum instrument $\left\{ \mathcal{N}^{A}_{x}\right\}_{x \in \mathcal{X}}$:
\begin{align}
    I_{G}(\{ \mathcal{N}^{A}_{x}\}_{x}; \rho^{A})&\coloneqq H(A)_{\rho}-\sum_{x\in \mathcal{X}}p_{X}(x)H(A)_{\theta_{x}}\\
    &=H(A)_{\rho}-H(A|X)_{\theta}\;.
    \label{eqn:groenewold_def}
\end{align}
When no confusion arises, in what follows we will use the shorthand notation $I_G$ instead of $I_{G}(\{ \mathcal{N}^{A}_{x}\}_{x}; \rho^{A})$. Although non-negative for projective measurements, the Groenewold information gain can, in general, become negative~\cite{ozawa-groen}. For example, this is easy to verify for an instrument that outputs the maximally mixed state $\theta^{A}_{x}=I^{A}/d^{A}$ for every measurement outcome $x \in \mathcal{X}$, independently of the input state. The negativity of the Groenewold information gain has arguably hindered its operational interpretation, and indeed previous researchers have proposed alternative measures of information gain that are always non-negative, possess an operational interpretation, and reduce to Groenewold information gain for projective measurements. For example, Ref.~\cite{buscemi2008global} defined the ``quantum information gain'' as the mutual information between the pre-measurement purifying reference system $R$ of $A$ and the classical register~$X$. It can be shown \cite{buscemi2008global} that the Groenewold information gain $I_{G}$ coincides with the quantum information gain (hence, $I_{G} \ge 0$) for efficient measurements (i.e., when $\mathcal{N}_{x}(\cdot)=N_{x}(\cdot)N^{\dagger}_{x}$, which maps pure states to pure states). Additionally, the quantum information gain has a compelling operational interpretation in terms of an information-processing task called measurement compression \cite{Winter01a} (see also \cite{wilde2012information,BRW14,LWD16,AJW19,AHP19}).

As we will end up discarding the classical register $X$ after using the outcome to perform QEC, it is important to consider the entropy reduction due to this procedure, which is given by
\begin{equation}
    H(AX)_{\theta}-H(A)_{\theta}=H(X|A)_{\theta} \geq 0\;, \label{eqn:cond_ent}
\end{equation}
where the non-negativity follows from the fact that quantum conditional entropy $H(X|A)_{\theta} =H(AX)_{\theta}-H(A)_{\theta} \ge 0$ is non-negative for every separable state. The state in this case can be written as $\theta^{AX}=\sum_{x \in \mathcal{X}}p_{X}(x)\theta^{A}_{x}\otimes |x\rangle \! \langle x|^{X}$ and is thus separable. It is known that the quantity $H(X|A)$ has an information-theoretic meaning as the compression rate in the task of classical data compression with quantum side information \cite{DW03}.
Thermodynamically, it is easy to see that the discarding process of $X$ from the joint system $AX$, characterized by the entropy change $H(AX)_{\theta}-H(A)_{\theta}$, is equivalent to the erasure of the classical register state $\theta^{X}\rightarrow |0\rangle\!\langle 0|^{X}$ (or in other words, erasing the memory of Maxwell's demon). See Ref.~\cite{Rio2011} for a study of thermodynamic erasure cost in the presence of a quantum memory, where quantum conditional entropy was shown to be the optimal erasure cost.

In this article, we also need to quantify the success of the QEC procedure: this can be accomplished by using a quantity called \textit{entanglement fidelity} $F_{e}$~\cite{schumacher1996quantum, schumacher1996sending}. If the initial state $\rho_{i}^{A}$ of a system $A$ is purified via a reference system $R$, i.e., $\rho_{i}^{A}\rightarrow \psi^{RA}$ with $\mathrm{Tr}_{R}[\psi^{RA}]=\rho^{A}_{i}$, then the entanglement fidelity is defined as
\begin{equation}
    F_{e}=\langle \psi^{RA}|\rho^{RA}_{f}|\psi^{RA}\rangle\;, \label{eqn:fidelity}
\end{equation}
where $\rho^{RA}_{f}=(\operatorname{id}\otimes(\mathcal{R}\circ \mathcal{E}))(\psi^{RA})$ is the state of $RA$ after the application of the noisy channel $\mathcal{E}$ and then the correcting channel $\mathcal{R}$. Physically, $F_{e}$ is equal to the probability that the state $\rho^{RA}_{f}$ passes a test for being the initial state $\psi^{RA}$ (in particular, the test can be conducted by performing the  measurement $\left\{ |\psi\rangle\!\langle \psi|^{RA}, I^{RA}-|\psi\rangle\!\langle \psi|^{RA} \right\}$) \cite{wilde2011classical}. Relevant properties of entanglement fidelity include being independent of the purification $R$ and being a lower bound on the input-output fidelity $F(\rho^{A}_{f}, \rho^{A}_{i})$ \cite{schumacher1996sending}, where the latter does not reflect the preservation (or the lack thereof) of entanglement that might be shared between $A$ and some other system (e.g., $R$ above).

It is important to note that when talking about QEC, it is not generally possible to correct for all possible preparations of the quantum state of $A$, but rather for a subset $\rho^{A} \in \mathcal{C} \subseteq  \mathcal{D}(\mathcal{H}^{A})$ of such states, where $\mathcal{C}$ is known as the codespace. Given a fixed noisy channel~$\mathcal{E}$, a necessary and sufficient condition for the existence of a codespace and a CPTP map $\mathcal{R}$ on that space that accomplishes perfect QEC ($F_{e}=1$) is given by the Knill--Laflamme theorem \cite{knill1997theory, ogawa2005perfect}. It is also worth noting that the state of the reference system does not change as a result of any local CPTP operation $\mathcal{I}^{R}\otimes \mathcal{M}^{A}$ on $A$. This can be easily seen by checking that $\mathrm{Tr}_{A}[(\mathcal{I}^{R}\otimes \mathcal{M}^{A})(\psi^{RA})]=\mathrm{Tr}_{A}[\psi^{RA}]$.

A major physical concept that plays a fundamental role in both fields of QTD and QEC is that of the irreversibility of the system-environment evolution $\rho^{A}_{i}\rightarrow \rho^{A}_{f}=\mathrm{Tr}_{B}[U(\rho^{A}_{i}\otimes \sigma^{B}_{i})U^{\dagger}]$, initiated from a product state. In QTD, this is typically quantified by $\Delta H(B)$, which is the net entropy dissipated into the environment $B$ \cite{bedingham2016thermodynamic}. In QEC, this quantity has been given the name \textit{entropy exchange} \cite{schumacher1996quantum, schumacher1996sending}. Here, the initial state of the environment is taken to be pure (without loss of generality \cite{mynote1}), and the initial state of $A$ is purified using a reference system $R$. Then, the entropy exchange $S_{e}$ is identified with the final state of $RA$, namely $S_{e}\coloneqq H(RA)_{\rho_{f}}$. It can be shown that this quantity is independent of the purification and, more importantly, that it serves as an upper bound via $S_{e} \ge |\Delta H(B)|$ \cite{schumacher1996sending}, where the inequality is saturated for a pure initial state of $B$. The latter property assigns a physical meaning to entropy exchange as the amount of entropy dissipated into an environment of an initially pure state.

Finally, in order to show what constraints thermodynamics can add to QEC, we have to define the concepts of thermodynamic work and heat. A more detailed discussion is contained in Appendix~\ref{sub:work_heat}. Given two systems $A$ and $B$, we can functionally separate a system of interest and a heat bath in the following way \cite{balian2007microphysics}: $(i)$ we assume that the heat bath $B$ is always prepared in an initially thermal state $\tau^{B}$ of temperature $T$, and $(ii)$ we assume that the system of interest $A$ is controllable (in contrast to the heat bath), via an external control (work) parameter~$\lambda_{t}$ with predetermined trajectory $\left\{ \lambda_{t} \right\}_{t \ge 0}$. The latter is reflected in the total Hamiltonian dynamics of $AB$ by the fact that the local Hamiltonian of $A$ (denoted by $\mathscr{H}^{A}$) depends implicitly on time via its explicit dependence on the work parameter $\lambda_{t}$, i.e., $\mathscr{H}^{A}_{\lambda_{t}}$. After making these two important distinctions between $A$ and $B$, we define the internal energy $\mathscr{U}^{A}_{t}$ of $A$ at time $t$, in the weak coupling limit, as \begin{equation}
    \mathscr{U}^{A}_{t}=\mathrm{Tr}[\rho^{A}_{t} \mathscr{H}^{A}_{\lambda_{t}} ]\;.
\end{equation}
From here, thermodynamic work and heat can be identified with two different contributions in the internal energy change of $A$ \cite{balian2007microphysics} via
\begin{equation}
    \partial_{t}\mathscr{U}^{A}_{t}
    =
    \mathrm{Tr}[(\partial_{t}\rho^{A}_{t}) \mathscr{H}^{A}_{\lambda_{t}}]+\mathrm{Tr}[ \rho^{A}_{t} (\partial_{t}\mathscr{H}^{A}_{\lambda_{t}}) ]\;, \label{eqn:work_heat}
\end{equation}
where the time integral of the first term defines heat $Q^{A}_{t}$ and that of the second defines work $W^{A}_{t}$. To validate the intuition behind these definitions, note that if the system $A$ is isolated (i.e., $B$ is not present and hence we expect no heat to be dissipated), then $\rho^{A}_{t}$ satisfies the Liouville-von Neumann equation, and the first term in Eq.~\eqref{eqn:work_heat} becomes zero due to the cyclicity of trace  $\mathrm{Tr}[CD]=\mathrm{Tr}[DC]$. On the other hand, if the external control parameter is not varied (hence we expect no work to be performed by or on $A$), then the second term in Eq.~\eqref{eqn:work_heat} becomes trivially equal to zero.

\section{Thermodynamics of General Quantum Instruments}

It is known \cite{ozawa1984quantum} that each CP map $\mathcal{N}^{A}_{x}$ of the quantum instrument $\{ \mathcal{N}^{A}_{x} \}_{x\in \mathcal{X}}$ can be written in terms of a common unitary interaction between the system $A$ and a measurement apparatus (probe) $M$, followed by a projective measurement on $M$ (also known as an indirect measurement of system $A$ \cite{ozawa1984quantum,breuer2002theory,hayashi2006quantum}), as follows
\begin{equation}
     \mathcal{N}^{A}_{x}(\rho^{A})=\mathrm{Tr}_{M} \!\left[ U^{AM}(\rho^{A}\otimes \sigma^{M})(U^{AM})^{\dagger}P^{M}_{x}\right]\;, \label{eqn:indirect_meas}
\end{equation}
where $\sigma^{M}\in \mathcal{D}(\mathcal{H}^{M})$ is the quantum state of the measurement apparatus, $\mathcal{H}^{M}$ is the Hilbert space of $M$, $U^{AM}$ is a global unitary acting on the joint system $AM$, and $\{ P^{M}_{x} \}_{x \in \mathcal{X}}$ is a complete set of projectors. We call the tuple $(\sigma^{M}, U^{AM}, \{ P^{M}_{x} \}_{x \in \mathcal{X}})$ an indirect measurement model~\cite{ozawa1984quantum} (or implementation) of the quantum instrument $\left\{ \mathcal{N}^{A}_{x}\right\}_{x\in \mathcal{X}}$. To a given quantum instrument, there is an infinite number of implementations, all yielding the same measurement statistics $\{ p_{X}(x) \}_{x \in \mathcal{X}}$ and the post-measurement states $\{\theta^{A}_{x}\}_{x\in \mathcal{X}}$ of the system $A$, but involving very different physical models in principle.

For the purposes of this article, it is important to note that the quantum instrument formalism can also be thought of as a unitary interaction between the system $A$, a measurement apparatus $M$, and a classical register $X$, as
\begin{equation}
    \theta^{AMX}=U^{AMX}(\rho^{A}\otimes \sigma^{M} \otimes |0\rangle \! \langle 0|^{X})(U^{AMX})^{\dagger}\;, \label{eqn:unitary_meas}
\end{equation}
with $U^{AMX}=V^{MX}U^{AM}$, where $U^{AM}$ is the unitary operator in~\eqref{eqn:indirect_meas} and $V^{MX}$ is another unitary operator performing a controlled transformation on the register system~\cite{abdelkhalek2016fundamental}
\begin{equation}
    V^{MX}=\sum_{x \in \mathcal{X}}P^{M}_{x}\otimes V^{X}_{x}\;,
\end{equation}
where $V^{X}_{x}|0\rangle^{X}=|x\rangle^{X}$. We can easily check that this unitary yields $\theta^{AX}=\mathrm{Tr}_{M}[\theta^{AMX}]$ with $\theta^{AX}$ given by Eq.~\eqref{eqn:quantum_instrument}.

\subsection{Measurement Heat}
\label{sub:meas_heat}

Heat is one of the central quantities of thermodynamics \cite{balian2007microphysics}, though its rigorous definition (along with the definition of internal energy) becomes problematic in certain regimes, such as in the presence of strong coupling~\cite{esposito2010entropy, kwon2019three} (please see Appendix~\ref{sub:work_heat}). However, in the weak-coupling regime, where contributions to the internal energy due to the interaction Hamiltonian can be neglected, the following definition is accepted: for a system $S$ of interest coupled with a thermal bath $B$, the heat $Q^{B\to S}_{t}$ absorbed by $S$ from $B$ or, equivalently, the heat $Q^{S\to B}_t$ dissipated from $S$ into $B$ is given by
\begin{equation}
    -Q^{B\to S}_{t}\equiv Q^{S\to B}_t= \Delta \langle \mathscr{H}^{B}\rangle \coloneqq \langle \mathscr{H}^{B}\rangle_{t}-\langle \mathscr{H}^{B}\rangle_{0}\;, \label{eqn:heat_def}
\end{equation}
where $\langle \mathscr{H}^{B}\rangle_{0}=\mathrm{Tr}[\tau^{B}\mathscr{H}^{B}]$ and $\langle \mathscr{H}^{B}\rangle_{t}=\mathrm{Tr}[\rho^{B}_{t}\mathscr{H}^{B}]$ denote the expectation of the bath Hamiltonian $\mathscr{H}^{B}$ at the initial and final times, respectively. In Appendix~\ref{sub:work_heat}, we review the definitions of work and heat in more detail. There, we recall that under a set of three assumptions: $(i)$ the interaction between $S$ and $B$ is unitary; $(ii)$ $B$ is initialized in a thermal state $\tau^{B}$ of inverse temperature $\beta=1/k_{B}T$; and $(iii)$ systems $S$ and $B$ are initially uncorrelated, one has~\cite{reeb2014improved}
\begin{equation}
    \beta Q_t^{S\to B}=-\Delta H(S)+ I(S:B)+D(\rho^{B}_{t}\Vert\tau^{B})\; . \label{eqn:landauer_eqn}
\end{equation}
There have been various efforts to characterize the heat absorbed or dissipated during a quantum measurement process, e.g., \cite{jacobs2009second, bedingham2016thermodynamic, abdelkhalek2016fundamental}. Here, we propose a definition of measurement heat of a quantum instrument $\{ \mathcal{N}^{A}_{x} \}_{x \in \mathcal{X}}$ in the special case where the measurement system (probe) $M$ is initially prepared in a thermal state, which is the natural choice when one is interested in the thermodynamics of quantum measurements, e.g., \cite{allahverdyan2013understanding, allahverdyan2017sub}. For that, recall that the quantum instrument formalism of a general quantum measurement can be described as a fully unitary interaction model between the system $A$ being measured, the measurement apparatus $M$, and a classical memory $X$, as described in Eq.~\eqref{eqn:unitary_meas}. Therefore, if the measurement apparatus $M$ is initially in a thermal state of temperature $T_{m}$, then the assumptions leading to Eq.~\eqref{eqn:landauer_eqn} apply, where the joint system $AX$ plays the role of $S$, while $M$ plays the role of the bath $B$. 
Accordingly, we have
\begin{align}
	 &\beta Q_{\operatorname{meas}} \notag\\
	& \coloneqq\beta Q^{M\to AX} \notag\\
	&=\Delta H(AX)- I(AX:M)_\theta-D(\theta^{M}\Vert\tau^{M})\notag\\
	&=H(AX)_\theta-H(A)_\rho- I(AX:M)_\theta-D(\theta^{M}\Vert\tau^{M})\notag\\
	&=H(X)_\theta-I_G- I(AX:M)_\theta-D(\theta^{M}\Vert\tau^{M})\;.
\end{align}
where $I_{G} \equiv I_{G}(\left\{\mathcal{N}^{A}_{x}\right\}_{x \in \mathcal{X}};\rho^{A})$ is the Groenewold information gain computed for the pre-measurement state $\rho^{A}$. From the above, we immediately see that a positive Groenewold information gain reduces the heat absorbed by $AX$ from the measurement apparatus, whereas a negative Groenewold information gain increases that heat.

A tighter characterization can be given if $\theta^M=\tau^M$, that is, if the final reduced state of the measurement apparatus remains thermal. This is a plausible assumption for a macroscopic measurement apparatus, whose internal state is not affected by the interaction with system and memory. In such a case, $D(\theta^{M}\Vert\tau^{M})=0$. We can then write
\begin{align}
	\beta Q_{\operatorname{meas}}&=\Delta H(AX)- I(AX:M)_\theta\\
	&=:\beta Q_{\operatorname{meas}}^{X}+\beta Q_{\operatorname{meas}}^{A|X}\;,
\end{align}
where we have used the chain rule of quantum mutual information~\cite{wilde2012information} and the definitions:
\begin{align}
    \beta Q_{\text{meas}}^{X}& \coloneqq \Delta H(X)-I(X:M)_{\theta}=H(X|M)_{\theta},\\
    \beta Q_{\text{meas}}^{A|X}& \coloneqq \Delta H(A|X)-I(A:M|X)_{\theta},
\end{align}
where $\Delta H(X)=H(X)_{\theta}$ because the classical memory $X$ is initialized from the pure state $|0\rangle \! \langle 0|^{X}$. Note that, on the one hand, the quantity $Q^{X}_{\text{meas}}$ is the amount of heat absorbed by the classical register $X$ in a general quantum measurement (involving $A$, $X$, and $M$), which provides an alternative physical meaning to the conditional quantum entropy $H(X|M)_{\theta}$ described after Eq.~\eqref{eqn:cond_ent}, for the special case of a quantum measurement. On the other hand, since $H(A|X)_{\rho}=H(A)_{\rho}$, we can also write $Q_{\text{meas}}^{A|X}$ as
\begin{align}
	\beta Q_{\operatorname{meas}}^{A|X}=-I_G-I(A:M|X)_\theta\le -I_G\;,
\end{align}
where the latter does not depend on $M$. We summarize the above discussion as follows:

\begin{theorem}
	Suppose that a quantum system $A$ undergoes a general quantum measurement described by the quantum instrument $\{ \mathcal{N}^{A}_{x} \}_{x \in \mathcal{X}}$. Further, suppose that the measurement is physically implemented according to an indirect measurement $(\sigma^{M}, U^{AM}, \{ P^{M}_{x} \}_{x \in \mathcal{X}})$ where $\sigma^{M}$ (the measurement apparatus' internal state) is thermal at temperature $T_{m}$. Suppose moreover that the reduced state of the measurement apparatus, after the measurement interaction, remains unchanged (it can become correlated with the other systems however). Then, the heat absorbed by the system $A$ from the measurement apparatus $M$ for each outcome $x\in\mathcal{X}$, averaged over all outcomes, is bounded from above by the negative of the Groenewold information gain:
	\begin{equation}
		Q_{\operatorname{meas}}^{A|X}\leq -k_{B}T_{m}I_{G}\;. \label{eqn:result_groenewold}
	\end{equation}
	Equivalently, the heat dissipated from the system into the measurement apparatus is bounded from below by the Groenewold information gain.
\end{theorem}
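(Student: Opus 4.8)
The plan is to specialize the Landauer-type equality~\eqref{eqn:landauer_eqn} to the fully unitary description of the measurement given in~\eqref{eqn:unitary_meas}, treating the joint system $AX$ as the thermodynamic ``system $S$'' and the probe $M$ as the ``bath $B$.'' First I would verify that the three hypotheses underlying~\eqref{eqn:landauer_eqn} hold in this setting: the interaction $U^{AMX}=V^{MX}U^{AM}$ is unitary; the probe $M$ is initialized in the thermal state $\sigma^{M}=\tau^{M}$ at inverse temperature $\beta=1/(k_{B}T_{m})$; and, since the global initial state factorizes as $\rho^{A}\otimes\sigma^{M}\otimes|0\rangle\!\langle 0|^{X}$, the composite $AX$ is initially uncorrelated with $M$. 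With these checked, \eqref{eqn:landauer_eqn} yields
\begin{equation}
\beta Q_{\operatorname{meas}} = \Delta H(AX) - I(AX\!:\!M)_{\theta} - D(\theta^{M}\Vert\tau^{M})\;.
\end{equation}

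Next I would invoke the standing assumption $\theta^{M}=\tau^{M}$ to eliminate the relative-entropy term, so that $D(\theta^{M}\Vert\tau^{M})=0$, and then apply the chain rule for quantum mutual information, $I(AX\!:\!M)=I(X\!:\!M)+I(A\!:\!M|X)$, together with the entropy chain rule $\Delta H(AX)=\Delta H(X)+\Delta H(A|X)$. Grouping the $X$-only and the conditional terms recovers the two contributions $\beta Q_{\operatorname{meas}}^{X}$ and $\beta Q_{\operatorname{meas}}^{A|X}$ defined just before the theorem, and isolates
\begin{equation}
\beta Q_{\operatorname{meas}}^{A|X}=\Delta H(A|X)-I(A\!:\!M|X)_{\theta}\;.
\end{equation}

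The key identification is that $\Delta H(A|X)=-I_{G}$. Because the register $X$ starts in the pure product state $|0\rangle\!\langle 0|^{X}$, the initial conditional entropy satisfies $H(A|X)_{\rho}=H(A)_{\rho}$, whence $\Delta H(A|X)=H(A|X)_{\theta}-H(A)_{\rho}=-\bigl(H(A)_{\rho}-H(A|X)_{\theta}\bigr)=-I_{G}$ by the definition~\eqref{eqn:groenewold_def} of the Groenewold information gain. Substituting this in gives $\beta Q_{\operatorname{meas}}^{A|X}=-I_{G}-I(A\!:\!M|X)_{\theta}$. Finally I would discard the conditional mutual information using its non-negativity (strong subadditivity of the von Neumann entropy), $I(A\!:\!M|X)_{\theta}\ge 0$, to conclude $\beta Q_{\operatorname{meas}}^{A|X}\le -I_{G}$, which upon multiplying by $k_{B}T_{m}=1/\beta$ delivers the claimed bound~\eqref{eqn:result_groenewold}.

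I do not anticipate a serious technical obstacle, since the argument is essentially a careful rearrangement once~\eqref{eqn:landauer_eqn} is available. The only points demanding care are bookkeeping ones: justifying that the composite $AX$ may legitimately play the role of the thermodynamic ``system'' in~\eqref{eqn:landauer_eqn}, in particular confirming the initial product structure between $AX$ and $M$, and noting that it is precisely the purity of the initial register that converts $\Delta H(A|X)$ into exactly $-I_{G}$ rather than some state-dependent correction term.
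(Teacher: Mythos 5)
Your proposal is correct and follows essentially the same route as the paper: both specialize the Reeb--Wolf equality~\eqref{eqn:landauer_eqn} to the unitary dilation~\eqref{eqn:unitary_meas} with $AX$ as the system and $M$ as the thermal bath, drop the relative-entropy term via $\theta^{M}=\tau^{M}$, split $Q_{\operatorname{meas}}$ with the entropy and mutual-information chain rules, use the purity of the initial register to get $\Delta H(A|X)=-I_{G}$, and discard $I(A\!:\!M|X)_{\theta}\ge 0$. No gaps; the bookkeeping points you flag (initial product structure of $AX$ with $M$, purity of $X$) are exactly the ones the paper relies on.
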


To our knowledge, this is the first attempt to provide a physical meaning to the Groenewold information gain even in the case when it is negative, which represents the vast majority (if not the totality) of all practically achievable quantum measurements. In fact, the Groenewold information gain is always non-negative only if the quantum measurement is \textit{quasicomplete}~\cite{ozawa-groen}, that is, when the post-measurement states $\mathcal{N}_x^A(\rho^A)$ are all pure whenever $\rho^A$ is pure: a condition which is arguably highly idealized.

\section{Thermodynamics of QEC Engines}
\label{QECengine}

In this section, we shall introduce the QEC engine and analyse its various stages of operation. The ultimate goal in this section is to derive a second law of thermodynamics for QEC engines that holds for an arbitrary initialization of the system, general (non-i.i.d.) thermal noise, and arbitrary quantum measurements, thereby generalizing results such as Ref.~\cite{sagawa2008second}. 
\subsection{Engine Components}
We consider an arbitrary thermal operation $\mathcal{T}$ as the noisy channel that we would like to correct, namely
\begin{equation}
    \mathcal{T}(\rho)=\mathrm{Tr}_{B}\left[U(\rho \otimes\tau^{B})U^{\dagger}\right]\;, \label{eqn:thermal_op}
\end{equation}
where $\tau^{B}$ is a thermal state of the bath and $U$ is a unitary operator that commutes with the total Hamiltonian \cite{Janzing2000}. We shall refer to the environment in this representation as the ``hot'' bath $B_{h}$ with temperature $T_{h}$ (inverse temperature $\beta_{h}$). We denote the system of interest and the purifying reference system by $S$ and $R$, respectively. We also denote by $A$ the ancillary system that is used to encode the system's state.

In order to make a connection between QEC and heat engines with feedback, we need to introduce a cold bath $B_{c}$ with temperature $T_{c}$ (inverse temperature $\beta_{c}$). The purpose of the cold bath is to recycle the ancillary system after the QEC procedure, in preparation for future use. This raises the question of what should the initial preparation of $A$ be for this construction to work. In the literature, the ancillary system is usually prepared in an initially pure state (often uncorrelated qubits), but this might not be possible for practical systems subject to thermal noise. Furthermore, resetting $A$ back to a pure state \cite{landi2020thermodynamic} might imply that the cold bath has to operate at zero temperature $T_{c}=0$, which is not only unphysical, but also takes away from the generality of possible thermodynamic statements that can be made for arbitrary $T_{c}$. Hence, to maintain full generality for arbitrary ancillary systems $A$ and arbitrary temperatures $T_{c}$ of the cold bath, we assume that $A$ is prepared in an initially thermal state of the same temperature $T_{c}$. By denoting the energy basis of $A$ by $\left\{ |\varepsilon_{x}\rangle^{A} \right\}_x$, the initial state of the ancillary system is therefore given by
\begin{equation}
    \tau^{A}_{c}=\sum_{x}p_{X}(x)|\varepsilon_{x}\rangle\!\langle \varepsilon_{x}|^A \;,
\end{equation} 
with
\begin{equation}
  p_{X}(x)\coloneqq e^{-\beta_{c}\varepsilon_{x}}/Z \hspace{0.5cm} \text{where} \hspace{0.5cm} Z=\sum_{x}e^{-\beta_{c}\varepsilon_{x}}\;.
\end{equation}
In our article, we accomplish the recycling back to the thermal state $\tau^{A}_{c}$ by conducting a SWAP operation between $A$ and one of the uncorrelated subsystems of $B_{c}$ that possesses the same dimensions and local Hamiltonian as $A$.

After initializing $A$ in $\tau^{A}_{c}$, we project it onto one of its energy eigenstates, in preparation for encoding. The outcome of the measurement is recorded using a classical register $X$, for the purpose of using it in the encoding and decoding stages of QEC.
The last component in the QEC engine is the classical register $Y$ that records the result of the error-syndrome measurement (conducted by a measurement apparatus $M$) and feeds it into the decoding channel, hence playing the same role as a feedback controller in a heat engine. 
For brevity, we shall use $E$ for the combined engine components $R$, $S$, $A$, $B_{h}$, and $B_{c}$. To summarize, the QEC engine is comprised of the systems $E$, $X$, and $Y$, all of which are depicted in Figure~\ref{fig:2}.

\begin{figure*}
    \centerline{
    \begin{quantikz}[
    column sep=0.38cm, row sep=0.2cm]
    \lstick[wires=5]{$\sigma^{E}_{i}$}\hspace{0.6cm} & \lstick[wires=2]{$\psi^{RS}$} & \qw\slice{$\sigma^{EXY}_{i}$} & \qw\slice{$\sigma^{EXY}_{0}$} & \qw & \qw & \qw\slice{$\sigma^{EXY}_{\text{enc}}$}  & \qw & \qw & \qw\slice{$\sigma^{EXY}_{\text{1}}$} & \qw & \qw & \qw\slice{$\sigma^{EXY}_{\text{2}}$} & \qw & \qw & \qw\slice{$\sigma^{EXY}_{\text{3}}$} & \qw & \qw & \qw\slice{$\sigma^{EXY}_{f}$}  & \qw & \qw & \rstick[wires=2]{$\sigma^{RS}_{f}$} \\
    &  & \qw & \qw & \qw & \gate[wires=2]{U_{\text{enc}, x}} & \qw & \qw & \gate[wires=3]{U_{h}} & \qw & \qw & \gate[wires=5, nwires={3, 4}]{{\mathcal{N}_{y|x}}} & \qw & \qw & \gate[wires=2][2cm]{U_{\text{dec}, xy}} & \qw & \qw & \qw & \qw & \qw & \qw & \\
    & \lstick{$\tau^{A}_{c}$} & \qw & \ctrl{4} & \qw & & \qw & \qw & & \qw & \qw & & \qw & \qw & & \qw & \qw & \gate[wires=3, nwires={2}]{U_{c}} & \qw & \qw & \qw & \rstick{$\tau^{A}_{c}$}\\
    & \lstick{$\tau^{B_{h}}_{h}$} & \qw & \qw & \qw & \qw & \qw & \qw & & \qw & \qw & \linethrough &  & \qw & \qw & \qw & \qw & \linethrough &  & \qw & \qw \\
    & \lstick{$\tau^{B_{c}}_{c}$} & \qw & \qw & \qw & \qw & \qw & \qw & \qw & \qw & \qw & \linethrough &  & \qw & \qw & \qw & \qw & & \qw & \qw & \qw \\
    & \lstick{$|0\rangle \! \langle 0|^{Y}$} & \qw & \qw & \qw & \qw & \qw & \qw & \qw & \qw & \qw & \qw & \qw & \push{y} & \ctrl{-3} & \qw & \qw & \qw & \qw & \qw & \qw & \rstick[wires=2]{{discard}}\\
    & \lstick{$|0\rangle \! \langle 0|^{X}$} & \qw & \gate[nwires={1}]{V_{x}}\qw & \push{x} & \ctrl{-4} & \qw & \qw & \qw & \qw & \push{x} & \qw \vcw{-1} & \qw & \push{x} & \ctrl{-4}\qw & \qw & \qw & \qw & \qw & \qw & \qw & 
    \end{quantikz}
    }
    \caption{Circuit description of the QEC engine. Lines locally disconnected from channels imply that the global channel (at that time step) has the effect of a local identity operation on the corresponding systems. We emphasise that the internal Hamiltonian of the reference system is assumed to be trivially zero at the initial and final times, namely $\mathscr{H}^{R}_{i}=\mathscr{H}^{R}_{f} \equiv 0$. All other (local and interaction) Hamiltonians are assumed to be cyclic, i.e., $\mathscr{H}_{i}=\mathscr{H}_{f}$. The only non-unitary steps in our engine are  the error-syndrome measurement (characterised by the quantum instrument $\{ \mathcal{N}^{SA}_{y|x}\}_{y}$ for a given $x$) and the discarding (memory erasure) of the classical registers $X$ and $Y$.}
    \label{fig:2}
\end{figure*}
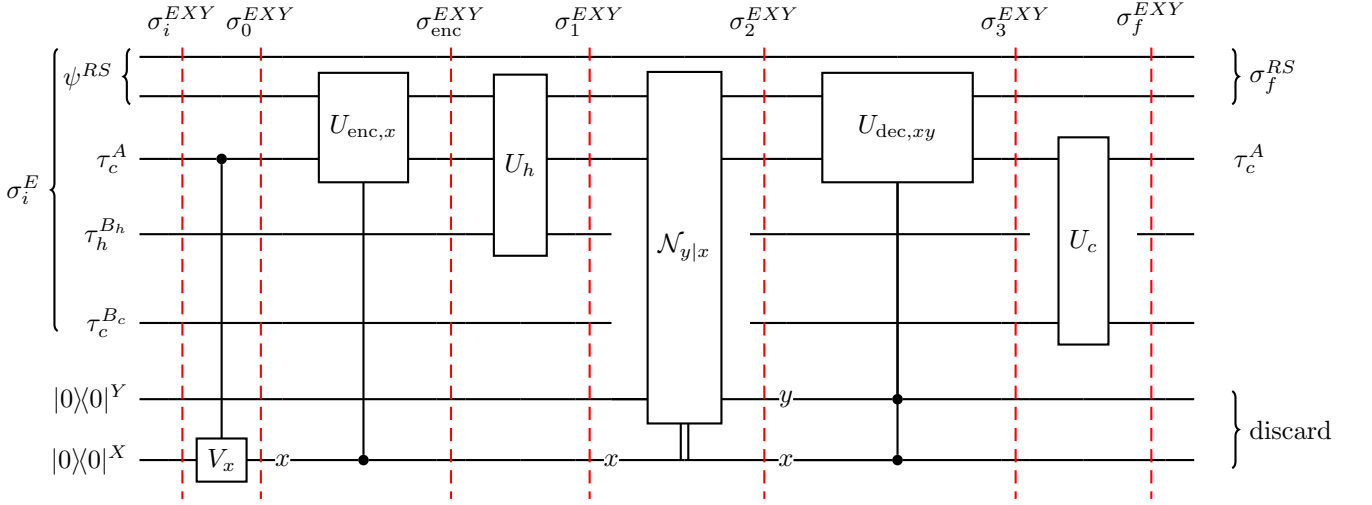

\subsection{Cycle Description}
\label{sub:stages}

We now describe in detail the operation of the QEC engine presented in Figure~\ref{fig:2}.

\medskip\textit{Initialization}: The initial state of the engine is assumed to be of product form
\begin{equation}
    \sigma^{EXY}_{i} \coloneqq \psi^{RS}\otimes \tau^{A}_{c} \otimes \tau^{B_{h}}_{h} \otimes \tau^{B_{c}}_{c} \otimes |0\rangle\!\langle 0|^{X}\otimes |0\rangle\!\langle 0|^{Y}\;, \label{eqn:initial}
\end{equation}
where $\psi^{RS}=|\psi\rangle\!\langle \psi|^{RS}$ is the purification of the arbitrary initial system state $\sigma^{S}_{i}$, $|0\rangle\!\langle 0|^{X}$ and $|0\rangle\!\langle 0|^{Y}$ are the initial states of the classical registers $X$ and $Y$, respectively, and $\tau^{A}_{c}$, $\tau^{B_{h}}_{h}$, and  $\tau^{B_{c}}_{c}$ are thermal states of $A$, $B_{h}$, and $B_{c}$ with temperatures $T_{c}$, $T_{h}$, and $T_{c}$, respectively.

\medskip\textit{Initial measurement stage}: Here we prepare $A$ for the encoding stage by making a projective measurement $\left\{P^{A}_{x}\right\}_{x\in \mathcal{X}}$ onto its energy basis (i.e., $P^{A}_{x}=|\varepsilon_{x}\rangle\!\langle \varepsilon_{x}|^{A}$), and hence the non-selective, post-measurement state is as follows:
\begin{equation}
    \sigma^{EXY}_{0} \coloneqq \sum_{x}p_{X}(x)\frac{P^{A}_{x}\sigma^{E}_{i}P^{A}_{x}}{p_{X}(x)}\otimes |x\rangle\!\langle x|^{X}\otimes |0\rangle\!\langle 0|^{Y}\;,
\end{equation}
where
\begin{equation}
    \frac{P^{A}_{x}\sigma^{E}_{i}P^{A}_{x}}{p_{X}(x)}=|\psi\rangle\!\langle \psi|^{RS} \otimes   |\varepsilon_{x}\rangle\!\langle \varepsilon_{x}|^{A}\otimes \tau^{B_{h}}_{h} \otimes \tau^{B_{c}}_{c}\;.
\end{equation}

\medskip\textit{Adaptive encoding stage}: Using the measurement outcome $x\in \mathcal{X}$, we encode the system $RS$ using the ancillary system $A$ and a unitary $U^{SA}_{\text{enc}, x}$ as follows:
\begin{equation}
    \sigma^{RSA}_{\text{enc},x} \coloneqq U^{SA}_{\text{enc}, x}(|\psi\rangle\!\langle \psi|^{RS}\otimes |\varepsilon_{x}\rangle\!\langle \varepsilon_{x}|^{A})(U^{SA}_{\text{enc}, x})^{\dagger}\;, \label{eqn:post_enc}
\end{equation}
and the total state $\sigma^{EXY}_{\text{enc}}$ becomes
\begin{equation}
     \sum_{x}p_{X}(x)  \sigma^{RSA}_{\text{enc},x} \otimes \tau^{B_{h}}_{h} \otimes \tau^{B_{c}}_{c}\otimes |x\rangle\!\langle x|^{X}\otimes |0\rangle\!\langle 0|^{Y}\;. \label{eqn:sig_enc}
\end{equation}

\medskip\textit{Thermal operation stage}: The encoded system $RSA$ interacts with the hot bath via the unitary $U^{SAB_{h}}$ and leads to the following state:
\begin{equation}
    \sigma^{RSAB_{h}}_{1,x} \coloneqq U^{SAB_{h}}(\sigma^{RSA}_{\text{enc},x}\otimes \tau^{B_{h}}_{h})(U^{SAB_{h}})^{\dagger}\;.
\end{equation}
Hence $\sigma^{EXY}_{1}$ at this stage is given by
\begin{equation}
   \sum_{x}p_{X}(x) \sigma^{RSAB_{h}}_{1,x}\otimes \tau^{B_{c}}_{c}\otimes |x\rangle\!\langle x|^{X} \otimes |0\rangle\!\langle 0|^{Y}\;. \label{eqn:sig_1}
\end{equation}
It is important to note that, in general, the noisy channel does not act on $S$ and $A$ independently. This relaxes one of the assumptions often made in the QEC literature, namely that of independent noisy channels acting on individual subsystems (specifically qubits).

\medskip\textit{Error-syndrome measurement stage}: After the thermal operation stage, we need a decoding process to recover the initial state of the system $S$. This process can be divided into two stages: error-syndrome measurement and error correction, characterised by the ($x$-value dependent) quantum instrument $\{ \mathcal{N}^{SA}_{y|x}\}_{y}$ and unitary operators $\{ U^{SA}_{\text{dec}, xy}\}_{x,y}$, respectively, where $y \in \mathcal{Y}_x$ is the measurement outcome of the former stage (we write $\mathcal{Y}_x$ because the set of possible outcomes is generally a function of the initial measurement outcome $x$). Therefore, the former is picked depending on the initial measurement outcome $x \in \mathcal{X}$, and the latter can generally depend on both measurement outcomes $(x,y)\in (\mathcal{X}, \mathcal{Y}_x)$. The state of the engine after the error-syndrome measurement step is described by the quantum-classical state (using the definition in Eq.~\eqref{eqn:quantum_instrument})
\begin{equation}
    \sigma^{EXY}_{2} \coloneqq \sum_{x,y}p_{XY}(x,y)\sigma^{E}_{2,xy}\otimes |x,y\rangle\!\langle x,y|^{XY}\;,
\end{equation}
where $p_{XY}(x,y)=p_{X}(x)p_{Y|X}(y|x)$ and
\begin{equation}
    \sigma^{E}_{2,xy} \coloneqq \frac{\mathcal{N}^{SA}_{y|x}(\sigma^{RSAB_{h}}_{1,x})}{p_{Y|X}(y|x)} \otimes \tau^{B_{c}}_{c}\;.
\end{equation}
It is important to note that the system $Y$ is the classical register used to record the measurement outcome, not the measurement apparatus itself. The latter is implicit in the quantum instrument $\{ \mathcal{N}^{SA}_{y|x}\}_{y}$ \cite{nielsen1998information, hayashi2006quantum}, which makes this stage non-unitary in general. Let us use the notation
\begin{equation}
    \mathcal{N}^{SA\rightarrow SAY}_{|x}(\cdot)\coloneqq \sum_{y\in \mathcal{Y}_{x}}\mathcal{N}^{SA}_{y|x}(\cdot)\otimes |y\rangle \! \langle y|^{Y}\;, \label{eqn:instrument_notation}
\end{equation}
to indicate the joint action of the instrument on the system and the memory.

\medskip\textit{Error-correction stage}: The decoding is completed by selectively applying $U^{SA}_{\text{dec},xy}$ based on the measurement outcomes $(x,y)$ as
\begin{equation}
    \sigma^{EXY}_{3}=\sum_{x,y}p_{XY}(x,y)\sigma^{E}_{3,xy}\otimes |x,y\rangle\!\langle x,y|^{XY}\;,
\end{equation}
where 
\begin{equation}
    \sigma^{E}_{3,xy}=U^{SA}_{\text{dec},xy}\sigma^{E}_{2,xy}(U^{SA}_{\text{dec},xy})^{\dagger}\;.
\end{equation}

\medskip\textit{Resetting of the ancillary system}: To prepare the ancillary system for a future cycle, we let $A$ interact with the cold bath to return it to its initial thermal state. In order to disentangle $A$ from the rest of the engine, we pick the thermal operation at this stage to be a fully thermalizing channel, which simply replaces the final state of $A$ with a newly prepared copy of $\tau^{A}_{c}$. The unitary that realizes this process evolves the state via
\begin{equation}
    \sigma^{EXY}_{f}=U^{AB_{c}}\sigma^{EXY}_{3}(U^{AB_{c}})^{\dagger}\;.
\end{equation}
We note that this can be realized by taking $\sigma^{B_{c}}_{3}=\tau^{B_{c}}_{c}=\otimes^{n}_{i=1}\tau^{B^{(i)}_{c}}_{c}$ (for some integer $n$ describing the size of the bath) and applying a SWAP operation (defined by $U_{\text{SWAP}}|\psi\rangle \otimes |\phi\rangle = |\phi \rangle \otimes |\psi\rangle$ for any two states $|\psi\rangle$ and $|\phi\rangle$) between $\sigma^{A}_{3}$ and $\tau^{B_{c}^{(1)}}_{c}$, leading to $\sigma_{f}^{A}=\tau^{A}_{c}$ and $\sigma^{B_{c}}_{f}=\sigma^{A}_{3}\otimes^{n}_{i=2}\tau^{B^{(i)}_{c}}_{c}$.

\medskip\textit{Discarding the classical registers}: At the final stage, we discard the classical registers $X$ and $Y$. As we discussed after Eq.~\eqref{eqn:cond_ent}, this discarding process is equivalent to an erasure of the Maxwell demon's memory, which is important for completing the thermodynamic cycle.

\subsection{Information-Theoretic Analysis}
\label{sub:info-theo}

In this subsection, we compute the entropy change at each stage of operation, with the purpose of arriving at a formula for the net entropy change of the engine after discarding the classical registers $X$ and $Y$. All information-theoretic quantities are computed for the sequence of states with subscripts
\[ i \to 0 \to \text{enc} \to 1 \to 2 \to 3 \to f\;, \]

We start by noting that $H(EXY)_{\sigma_{0}}=H(EXY)_{\sigma_{i}}$ because $\sigma_{0}$ arises from $\sigma_{i}$ by means of a controlled unitary, where $A$ is the control system and $X$ is the target system. The controlled unitary is given by its action
\begin{equation}
|\varepsilon_{x}\rangle^{A}\otimes |0\rangle^{X}\longrightarrow |\varepsilon_{x}\rangle^{A}\otimes |x\rangle^{X}\;,
\end{equation}
%
via $V^{X}_{x}|0\rangle^{X}=|x\rangle^{X}$, and thus it can be written as
\begin{equation}
    \sum_{x}|\varepsilon_{x}\rangle\!\langle \varepsilon_{x}|^{A}\otimes V^{X}_{x} .
\end{equation}
The entropy at the encoding stage (defined by Eq.~\eqref{eqn:sig_enc}) also does not change $H(EXY)_{\sigma_{\text{enc}}}=H(EXY)_{\sigma_0}$ because this stage is also described by a controlled unitary where $X$ is the control system and $RSA$ is the target system. Moreover, it is clear from Eqs.~\eqref{eqn:sig_enc} and \eqref{eqn:sig_1} that $H(EXY)_{\sigma_{1}}=H(EXY)_{\sigma_{\text{enc}}}$ due to the unitary interaction between $RSA$ and $B_{h}$. 

The first non-unitary step is the error-syndrome measurement stage. The entropy after the measurement is computed as follows:
\begin{align}
    H(EXY)_{\sigma_{2}} &= H(X) + H(EY|X)_{\sigma_{2}} \\
    &= H(X)+\sum_{x}p_{X}(x)H(EY)_{\sigma_{2,x}}\;, \label{eqn:error_syndrome_ent}
\end{align}
where we have used the notation $H(EY|X=x)_{\sigma_{2}}\equiv H(EY)_{\sigma_{2,x}}$. Note that it is convenient to isolate the contributions due to $X$ because $X$ is the system conditioned on which the measurement on the engine is chosen. We continue writing $H(EY)_{\sigma_{2,x}}$ in the form
\begin{align}
& H(EY)_{\sigma_{2,x}} \notag \\
   &= H(E)_{\sigma_{1,x}}-\left( H(E)_{\sigma_{1,x}}-H(EY)_{\sigma_{2,x}} \right) \notag  \\
    & = H(E)_{\sigma_{1,x}} - \left( H(E)_{\sigma_{1,x}}-H(Y)_{\sigma_{2,x}}-H(E|Y)_{\sigma_{2,x}} \right) \notag \\
    & = H(E)_{\sigma_{1,x}}+H(Y)_{\sigma_{2,x}}-I_{G}(\{ \mathcal{N}^{SA}_{y|x} \}_{y}; \sigma^{E}_{1,x})\;,
\end{align}
where in the last line we have used the definition of the Groenewold information gain in Eq.~\eqref{eqn:groenewold_def}. From now on, we shall denote the Groenewold information gain $I_{G}(\{ \mathcal{N}^{SA}_{y|x} \}_{y}; \sigma^{E}_{1,x})$ by $I_{G,x}$ for brevity. Substituting this result back into Eq.~\eqref{eqn:error_syndrome_ent} yields
\begin{align}
   &H(X)+\sum_{x}p_{X}(x)\left(H(E)_{\sigma_{1,x}}+H(Y)_{\sigma_{2,x}}-I_{G,x}\right)\notag \\
   &=H(X)+H(E|X)_{\sigma_{1}}+H(Y|X)_{\sigma_{2}}-\sum_{x}p_{X}(x)I_{G,x} \notag \\
    &= H(EXY)_{\sigma_{1}}+H(Y|X)_{\sigma_{2}}-\sum_{x}p_{X}(x)I_{G,x}\;,
\end{align}
where in the last equality we used the fact that $H(EXY)_{\sigma_{1}}=H(EX)_{\sigma_{1}}$. Thus, the entropy change in the error-syndrome measurement stage becomes
\begin{equation}
H(EXY)_{\sigma_{2}}-H(EXY)_{\sigma_{1}}=    H(Y|X)_{\sigma_{2}}-\sum_{x}p_{X}(x)I_{G,x}\;. \label{eqn:non_unitary_ent}
\end{equation}
Regarding the error-correction stage, we can easily see that $H(EXY)_{\sigma_{3}}=H(EXY)_{\sigma_{2}}$ because this stage is also described by a controlled unitary. Finally, we obviously have $H(EXY)_{\sigma_{f}}=H(EXY)_{\sigma_{3}}$ for the recycling stage of the ancillary system due to the unitarity of the SWAP operation between $A$ and an identical subsystem from~$B_{c}$.

The final step of the QEC cycle is the second, and last non-unitary step after the error-syndrome measurement: here we discard the classical registers $X$ and $Y$. The entropy change due to the discarding process is simply given by $H(E)_{\sigma_{f}}-H(EXY)_{\sigma_{f}}=-H(XY|E)_{\sigma_{f}}$.
Namely, discarding the classical registers only changes the entropy of the engine if the information encoded in them cannot be perfectly recovered from the final reduced state of the engine $E$.

From the above analysis, we can easily compute the total entropy change of $E$ by using Eqs.~\eqref{eqn:initial} and \eqref{eqn:non_unitary_ent} and we summarize it in the following: 

\begin{theorem}
\label{th:info_theo}
The entropy change of the quantum error-correcting engine described in Figure~\ref{fig:2} after discarding the classical registers $X$ and $Y$ can be written in terms of information-theoretic quantities as
\begin{multline}
    H(E)_{\sigma_{f}}-H(E)_{\sigma_{i}}=\\H(Y|X)_{\sigma_{2}}
    -\sum_{x}p_{X}(x)I_{G,x}-H(XY|E)_{\sigma_{f}}\;.  
    \label{eqn:info-theo-eqn}
\end{multline}
\end{theorem}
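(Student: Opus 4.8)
The plan is to exploit the fact that the entire cycle $i \to 0 \to \text{enc} \to 1 \to 2 \to 3 \to f$ consists of unitary transitions interrupted by exactly two non-unitary steps: the error-syndrome measurement $\sigma_1 \to \sigma_2$ and the final discarding of the classical registers. I would therefore telescope the total entropy change across the sequence, letting the unitary steps contribute nothing and isolating the two entropy-changing steps.

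First I would collapse the unitary portions. Since the von Neumann entropy of the full system $EXY$ is invariant under any global unitary, the transitions $i \to 0$, $0 \to \text{enc}$, and $\text{enc} \to 1$ (each a controlled unitary or a system--bath unitary, as argued in the cycle description) give $H(EXY)_{\sigma_i} = H(EXY)_{\sigma_1}$, and likewise the transitions $2 \to 3$ (controlled unitary) and $3 \to f$ (SWAP) give $H(EXY)_{\sigma_2} = H(EXY)_{\sigma_f}$. This reduces the entire change in $H(EXY)$ across the unitary stages to the single measurement step, so that
\begin{equation}
H(EXY)_{\sigma_f} - H(EXY)_{\sigma_i} = H(EXY)_{\sigma_2} - H(EXY)_{\sigma_1}\;,
\end{equation}
for which I would substitute the already-derived expression in Eq.~\eqref{eqn:non_unitary_ent}, namely $H(Y|X)_{\sigma_2} - \sum_x p_X(x) I_{G,x}$.

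Next I would convert these full-system entropies into the marginal entropies $H(E)$ that appear in the statement. At the initial time, the registers $X$ and $Y$ begin in the pure, uncorrelated states $|0\rangle\!\langle 0|^X$ and $|0\rangle\!\langle 0|^Y$ (see Eq.~\eqref{eqn:initial}), so they carry zero entropy and are product with $E$, giving $H(EXY)_{\sigma_i} = H(E)_{\sigma_i}$. At the final time, the discarding step removes $X$ and $Y$, and by the definition of conditional entropy this costs $H(E)_{\sigma_f} - H(EXY)_{\sigma_f} = -H(XY|E)_{\sigma_f}$. Substituting both relations together with the measurement-step expression into the telescoped identity yields exactly Eq.~\eqref{eqn:info-theo-eqn}.

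The bulk of the genuine work lives in the measurement step, Eq.~\eqref{eqn:non_unitary_ent}, which I would treat as given; the only remaining subtlety — and the place where I expect careful bookkeeping to matter most — is keeping precise track of which subsystems enter each entropy and ensuring the conditioning on the classical register $X$ is handled correctly, since the instrument applied in the syndrome stage is itself selected by the outcome $x$. Once the invariance of entropy under the global unitaries and the initial purity of $X$ and $Y$ are invoked, the assembly is a purely mechanical telescoping.
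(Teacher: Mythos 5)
Your proposal is correct and follows essentially the same route as the paper: entropy invariance of $H(EXY)$ under the unitary stages, the measurement-step change given by Eq.~\eqref{eqn:non_unitary_ent}, the identification $H(EXY)_{\sigma_i}=H(E)_{\sigma_i}$ from the initial purity of $X$ and $Y$, and the discarding cost $-H(XY|E)_{\sigma_f}$. The paper's own derivation is exactly this telescoping assembly (with Eq.~\eqref{eqn:non_unitary_ent} established immediately beforehand), so no gap remains.
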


\subsection{Thermodynamic Analysis}
\label{sub:thermo}

Here we derive a formula for the entropy change $H(E)_{\sigma_{f}}-H(E)_{\sigma_{i}}$ in terms of the heat dissipated into the two baths (which can be measured as described in, e.g., \cite{goold2015nonequilibrium}). If we denote by $\mathscr{H}^{B_{h(c)}}$ the Hamiltonian of the hot (cold) bath and $Q_{B_{h(c)}}=\langle \mathscr{H}^{B_{h(c)}}\rangle _{i}-\langle \mathscr{H}^{B_{h(c)}} \rangle _{f}$ the total heat absorbed by $SA$ from the hot (cold) bath over the whole QEC cycle (where $\langle \cdots \rangle_{i}$ and $\langle \cdots \rangle_{f}$ denote expectations over the initial and final states of the baths, respectively), then we can formulate the following theorem to describe the entropy change in terms of the dissipated heat.

\begin{theorem}\label{th:entropy-change}
	The entropy change of the quantum error-correcting engine described in Figure~\ref{fig:2} after discarding the classical registers $X$ and $Y$ can be written in terms of the heat dissipated into the hot and cold baths as
    \begin{multline}
    H(E)_{\sigma_{f}}-H(E)_{\sigma_{i}}=S_{e}-\frac{Q_{h}}{k_{B}T_{h}}-\frac{Q_{c}}{k_{B}T_{c}} \\ -I(RS:B_{h}B_{c})_{\sigma_{f}}-\Gamma(B_{h}B_{c})_{\sigma_{f}}\;,  \label{eqn:thermoeqn}  
\end{multline}
where 
\begin{multline}
    \Gamma(B_{h}B_{c})_{\sigma_{f}} \coloneqq I(B_{h}:B_{c})_{\sigma_{f}}+D(\sigma^{B_{h}}_{f}\Vert\tau^{B_{h}}_{h})\\+D(\sigma^{B_{c}}_{f}\Vert\tau^{B_{c}}_{c}) \ge 0\;,
\end{multline}
depends only on the final states of the thermal baths and $S_{e}=H(RS)_{\sigma_{f}}$ is the entropy exchange.
\end{theorem}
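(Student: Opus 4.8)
The plan is to evaluate $H(E)_{\sigma_{f}}-H(E)_{\sigma_{i}}$ directly from the explicit forms of the initial and final reduced states on $E=RSAB_{h}B_{c}$, and then to trade the bath entropy changes for heat currents using the Gibbs form of the thermal states. First I would exploit the product structure of the initialization \eqref{eqn:initial}: since $\sigma^{E}_{i}=\psi^{RS}\otimes\tau^{A}_{c}\otimes\tau^{B_{h}}_{h}\otimes\tau^{B_{c}}_{c}$ with $\psi^{RS}$ pure, additivity of the von Neumann entropy over tensor factors gives $H(E)_{\sigma_{i}}=H(A)_{\tau^{A}_{c}}+H(B_{h})_{\sigma_{i}}+H(B_{c})_{\sigma_{i}}$. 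For the final state, the key observation is that the resetting (SWAP) stage returns $A$ to a freshly prepared copy of $\tau^{A}_{c}$ that is uncorrelated with the remaining systems, so $\sigma^{E}_{f}=\tau^{A}_{c}\otimes\sigma^{RSB_{h}B_{c}}_{f}$ and hence $H(E)_{\sigma_{f}}=H(A)_{\tau^{A}_{c}}+H(RSB_{h}B_{c})_{\sigma_{f}}$. The two $H(A)_{\tau^{A}_{c}}$ terms cancel in the difference, which is precisely where the assumption that $A$ begins and ends in the same thermal state is used.

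Next I would peel apart $H(RSB_{h}B_{c})_{\sigma_{f}}$ using the mutual-information identities
\begin{align}
H(RSB_{h}B_{c})_{\sigma_{f}}&=H(RS)_{\sigma_{f}}+H(B_{h}B_{c})_{\sigma_{f}}-I(RS\!:\!B_{h}B_{c})_{\sigma_{f}},\\
H(B_{h}B_{c})_{\sigma_{f}}&=H(B_{h})_{\sigma_{f}}+H(B_{c})_{\sigma_{f}}-I(B_{h}\!:\!B_{c})_{\sigma_{f}},
\end{align}
and identify $H(RS)_{\sigma_{f}}$ with the entropy exchange $S_{e}$. Combining with the initial-state expression, the bath terms regroup into $\Delta H(B_{h})+\Delta H(B_{c})$, where $\Delta H(B_{h/c})\coloneqq H(B_{h/c})_{\sigma_{f}}-H(B_{h/c})_{\sigma_{i}}$, leaving
\begin{equation}
H(E)_{\sigma_{f}}-H(E)_{\sigma_{i}}=S_{e}+\Delta H(B_{h})+\Delta H(B_{c})-I(B_{h}\!:\!B_{c})_{\sigma_{f}}-I(RS\!:\!B_{h}B_{c})_{\sigma_{f}}.
\end{equation}

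The remaining step is to express each single-bath entropy change thermodynamically. For this I would use the purely algebraic identity, valid whenever a bath starts in the Gibbs state $\tau^{B}=e^{-\beta\mathscr{H}^{B}}/Z$, that $\Delta H(B)=-\beta Q-D(\sigma^{B}_{f}\Vert\tau^{B})$, obtained by inserting $\ln\tau^{B}=-\beta\mathscr{H}^{B}-\ln Z$ into $D(\sigma^{B}_{f}\Vert\tau^{B})$ and using $Q=\langle\mathscr{H}^{B}\rangle_{i}-\langle\mathscr{H}^{B}\rangle_{f}$ together with the thermal-entropy formula $H(\tau^{B})=\beta\langle\mathscr{H}^{B}\rangle_{i}+\ln Z$. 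Applying this separately to $B_{h}$ at $\beta_{h}$ and $B_{c}$ at $\beta_{c}$, and collecting the two relative-entropy terms with $I(B_{h}\!:\!B_{c})_{\sigma_{f}}$ into $\Gamma(B_{h}B_{c})_{\sigma_{f}}$, yields the claimed formula with $\beta_{h/c}=1/(k_{B}T_{h/c})$; nonnegativity of $\Gamma$ then follows from nonnegativity of mutual information and relative entropy.

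The computation is largely bookkeeping, so the places demanding the most care are conceptual rather than technical. The main check is the claim $\sigma^{E}_{f}=\tau^{A}_{c}\otimes\sigma^{RSB_{h}B_{c}}_{f}$, i.e., that the cold-bath SWAP genuinely decouples $A$ and restores it exactly to $\tau^{A}_{c}$, which is what allows the ancilla contribution to cancel and lets entropy exchange appear cleanly. The second subtlety is the sign convention: because $Q_{h/c}$ denotes heat \emph{absorbed} by $SA$ from the bath, the bath energy difference enters with the opposite sign, producing the minus signs in front of $Q_{h}/(k_{B}T_{h})$ and $Q_{c}/(k_{B}T_{c})$. I would emphasize that the bath identity is a marginal statement requiring only that each bath starts thermal, needing neither global unitarity nor initial system--bath independence; this is what permits the two baths at different temperatures to be treated independently even though \eqref{eqn:landauer_eqn} was stated for a single bath.
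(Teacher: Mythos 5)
Your proposal is correct and follows essentially the same route as the paper: decouple $A$ via the cold-bath SWAP, expand $H(RSB_{h}B_{c})_{\sigma_{f}}$ with the two mutual-information terms, identify $H(RS)_{\sigma_{f}}=S_{e}$, and convert each bath entropy change into heat plus a relative-entropy term. The only (harmless) difference is that you derive the marginal identity $\Delta H(B)=-\beta Q-D(\sigma^{B}_{f}\Vert\tau^{B})$ directly, whereas the paper invokes Eq.~\eqref{eqn:landauer_eqn} for the two baths; the two are equivalent for this step.
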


\begin{proof}
We start from the final entropy of the engine 
\begin{align}
    H(E)_{\sigma_{f}} & \equiv H(RSAB_{h}B_{c})_{\sigma_{f}}\\
    & =H(RSB_{h}B_{c})_{\sigma_{f}}+H(A)_{\sigma_{f}}\;,
\end{align}
where we have used the fact that the recycling of the ancillary system disentangles it from the rest of the engine, so that $I(RSAB_{h}B_{c}:A)_{\sigma_{f}}=0$ \cite{horodecki2003entanglement}. Furthermore, using the definition of quantum mutual information, we can simplify $H(RSB_{h}B_{c})_{\sigma_{f}}$ further to
\begin{multline}
    H(RSB_{h}B_{c})_{\sigma_{f}} =  H(RS)_{\sigma_{f}}+H(B_{h})_{\sigma_{f}}+H(B_{c})_{\sigma_{f}}\\-I(B_{h}:B_{c})_{\sigma_{f}}-I(RS:B_{h}B_{c})_{\sigma_{f}}\;.
\end{multline}
Hence, for the entropy difference, we have 
\begin{multline}
    H(E)_{\sigma_{f}}-H(E)_{\sigma_{i}}=H(RS)_{\sigma_{f}}+\Delta H(B_{h})_{\sigma_{f}} \\ +\Delta H(B_{c})_{\sigma_{f}}-I(RS:B_{h}B_{c})_{\sigma_{f}}-I(B_{h}:B_{c})_{\sigma_{f}}\;,
\end{multline}
where we have used Eq.~\eqref{eqn:initial} and the fact that $\Delta H(RS)=H(RS)_{\sigma_{f}}$ because $\sigma^{RS}_{i}=|\psi\rangle\!\langle \psi|^{RS}$ is pure. Finally, we can use the Reeb--Wolf formula Eq.~\eqref{eqn:landauer_eqn} for the entropy changes of the two baths, yielding
\begin{multline}
    H(E)_{\sigma_{f}}-H(E)_{\sigma_{i}}=H(RS)_{\sigma_{f}}-\frac{Q_{h}}{k_{B}T_{h}}-\frac{Q_{c}}{k_{B}T_{c}}\\-D(\sigma_{f}^{B_{h}}\Vert \tau^{B_{h}}_{h})-D(\sigma_{f}^{B_{c}}\Vert \tau^{B_{c}}_{c})\\-I(RS:B_{h}B_{c})_{\sigma_{f}}-I(B_{h}:B_{c})_{\sigma_{f}}\;,
\end{multline}
which proves the theorem.
\end{proof}

The entropy exchange $S_{e}$, introduced in Refs.~\cite{schumacher1996quantum, schumacher1996sending}, quantifies the purity of the final (error-corrected) system state $\sigma^{RS}_{f}$. Note that having $S_{e}=0$ is only necessary, but not sufficient for perfect QEC. More precisely, the entropy exchange is related to the entanglement fidelity~\eqref{eqn:fidelity} via the quantum Fano inequality, that is, \cite{schumacher1996sending}
\begin{equation}
    H(F_{e})+(1-F_{e})\ln(d^{2}-1)\ge S_{e}\;, \label{eqn:fano}
\end{equation}
where $H(x)\coloneqq-x\log x-(1-x)\log(1-x)$ denotes the binary entropy and $d=\dim (\mathcal{H}_{S}\otimes \mathcal{H}_{R})$. We can clearly see that $F_{e}=1 \Rightarrow S_{e}=0$, but the converse statement does not necessarily hold. 

\subsection{Second-Law Inequality}

Using the two forms of the entropy change of $E$ from Eqs.~\eqref{eqn:info-theo-eqn} and \eqref{eqn:thermoeqn}, we arrive at the total entropic balance equation for the QEC engine:
\begin{multline}
    \frac{Q_{h}}{k_{B}T_{h}}+\frac{Q_{c}}{k_{B}T_{c}}= S_{e}+\sum_{x}p_{X}(x)I_{G,x}\\ -H(Y|X)_{\sigma_{2}} +H(XY|E)_{\sigma_{f}}\\-\Gamma(B_{h}B_{c})_{\sigma_{f}}
    -I(RS:B_{h}B_{c})_{\sigma_{f}}\;. 
\end{multline}
The corresponding second-law inequality follows from $I(RS:B_{h}B_{c})_{\sigma_{f}}\ge 0$ and $\Gamma(B_{h}B_{c})_{\sigma_{f}} \ge 0$ and is stated as follows:

\begin{corollary}
Given a general quantum error-correcting engine described in Figure~\ref{fig:2}, the Clausius formulation of the second law takes the form
\begin{multline}
    \frac{Q_{h}}{k_{B}T_{h}}+\frac{Q_{c}}{k_{B}T_{c}}\leq S_{e}+\sum_{x}p_{X}(x)I_{G,x}\\-H(Y|X)_{\sigma_{2}}+H(XY|E)_{\sigma_{f}}\;, \label{eqn: 2nd_law}
\end{multline}
where the bound is achieved when $I(RS:B_{h}B_{c})_{\sigma_{f}}=\Gamma(B_{h}B_{c})_{\sigma_{f}}=0$ (see the statement of Theorem~\ref{th:entropy-change}).
\end{corollary}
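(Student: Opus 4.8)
The plan is to combine the two exact expressions for the engine's entropy change and then discard the manifestly non-negative correction terms. First I would take the information-theoretic identity of Theorem~\ref{th:info_theo} and the thermodynamic identity of Theorem~\ref{th:entropy-change}, both of which evaluate $H(E)_{\sigma_{f}}-H(E)_{\sigma_{i}}$, and equate their right-hand sides. Solving the resulting equation for $\tfrac{Q_{h}}{k_{B}T_{h}}+\tfrac{Q_{c}}{k_{B}T_{c}}$ reproduces the exact entropic balance equation displayed just above the statement, in which the heat ratios equal $S_{e}+\sum_{x}p_{X}(x)I_{G,x}-H(Y|X)_{\sigma_{2}}+H(XY|E)_{\sigma_{f}}$ with the two correction terms $\Gamma(B_{h}B_{c})_{\sigma_{f}}$ and $I(RS:B_{h}B_{c})_{\sigma_{f}}$ subtracted.

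Next I would invoke the non-negativity of both correction terms. The quantity $I(RS:B_{h}B_{c})_{\sigma_{f}}\ge 0$ is simply the non-negativity of quantum mutual information, equivalently subadditivity of the von Neumann entropy. The quantity $\Gamma(B_{h}B_{c})_{\sigma_{f}}\ge 0$ was already recorded in Theorem~\ref{th:entropy-change}, being a sum of a mutual information and two quantum relative entropies, each non-negative (the latter by Klein's inequality, finite because the reference thermal states have full support). Since both correction terms enter the balance equation with a minus sign, deleting them can only raise the right-hand side, which yields precisely the claimed inequality~\eqref{eqn: 2nd_law}.

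The saturation statement then follows by inspection of the exact balance equation: the inequality collapses to an equality exactly when the two discarded terms both vanish, i.e.\ when $I(RS:B_{h}B_{c})_{\sigma_{f}}=0$ and $\Gamma(B_{h}B_{c})_{\sigma_{f}}=0$. I do not expect a substantive obstacle, since the argument is a direct algebraic combination of two previously established identities together with standard non-negativity facts. The only point demanding care is sign bookkeeping: one must confirm that both correction terms are \emph{subtracted} in the balance equation, so that discarding them produces an upper bound rather than a lower bound.
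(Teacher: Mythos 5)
Your proposal is correct and follows essentially the same route as the paper: it equates the two expressions for $H(E)_{\sigma_{f}}-H(E)_{\sigma_{i}}$ from Theorems~\ref{th:info_theo} and~\ref{th:entropy-change} to obtain the exact entropic balance equation, then drops the non-negative terms $I(RS:B_{h}B_{c})_{\sigma_{f}}$ and $\Gamma(B_{h}B_{c})_{\sigma_{f}}$ to arrive at Eq.~\eqref{eqn: 2nd_law}, with saturation precisely when both vanish. No gaps; your sign bookkeeping matches the paper's derivation.
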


Here, we remind the reader that a physical interpretation of each of the terms appearing in the second-law inequality has been provided above. This inequality adds upon previous literature \cite{landi2020thermodynamic, sagawa2008second} in various ways. We elaborate further on this point in Sec.~\ref{sec:discussion}. 

\section{Efficiency-Fidelity Trade-off}

In this section, we choose to focus on the thermodynamic efficiency of the QEC engine in Figure~\ref{fig:2}. This is motivated by the following observation: A heat engine operating cyclically (with respect to both the state and the Hamiltonian of the working fluid) between two heat baths can possess a thermodynamic efficiency $\eta$ that is no larger than the Carnot efficiency $\eta_{C}$ (in the absence of a feedback controller). However, the state-cyclicity condition in heat-engine thermodynamics can be reinterpreted as a perfect QEC condition of the working fluid. In other words, maximum work extraction in thermodynamics can be viewed as the optimization of the efficiency $\eta \rightarrow \eta^{\star}\equiv \max \eta$ of the engine (where $\eta^{\star}\equiv \eta_{c}$ with no feedback), given the cyclicity constraint $F_{e}=1$ in terms of the error-correction fidelity. Therefore, a natural extension to this observation is to ask whether the task of QEC can be framed in a similar way, namely, as an optimization of the error-correction fidelity $F_{e} \rightarrow F_{e}^{\star}\equiv \max F_{e}$, given any fixed value of the thermodynamic efficiency $\eta$ within its allowable range. Additionally, recall that the QEC engine in Figure~\ref{fig:2} is a heat engine with a feedback controller, and therefore, the thermodynamic efficiency $\eta$ can generally become larger than the Carnot efficiency $\eta_{C}$ \cite{sagawa2008second}. This means that optimizing $F^{\star}_{e}$ for a thermodynamically desirable efficiency $\eta\ge \eta_{C}$ is of interest.

In this section, we derive a direct trade-off relation between $F^{\star}_{e}$ and $\eta$ for a QEC engine operating in the super-Carnot regime ($\eta \ge \eta_{C}$). The $(F^{\star}_{e}, \eta)$ trade-off relation is a special case of a triple trade-off relation between the maximum achievable entanglement fidelity $F_{e}^{\star}$, the thermodynamic efficiency $\eta$ of the QEC engine, and the efficacy of the error-syndrome measurement. The triple trade-off relation is derived under a set of physically motivated assumptions, the validity of which is further demonstrated in Sec.~\ref{sub:example}. As a consequence, a second trade-off relation exists between the efficiency and the efficacy (defined in Eq.~\eqref{eqn:efficacy}) whenever the state-cyclicity condition $F_{e}=1$ (i.e., perfect QEC) is satisfied.

\subsection{Origins of the Trade-off: Quantum Fano Inequality}
\label{sub:fano}

In Sec.~\ref{sub:thermo}, we discussed the quantum Fano inequality. This inequality can be rearranged in the form
\begin{equation}
    H(F_{e}) \ge \ln(d^{2}-1)(F_{e}-1)+S_{e}\;,
\end{equation}
where the binary entropy on the left-hand side is a concave function in the region $F_{e}\in [0, 1]$, taking values between 0 (for $F_{e} \in \{0, 1\}$) and 1 (for $F_{e}=0.5$), and the right-hand side is a linear function of $F_{e}$ with a positive slope. A simple pictorial interpretation of the above inequality shows that for varying values of entropy exchange $S_{e} \ge 0$, there exists a direct trade-off relation between the exact value of $S_{e}$ and the upper value $F^{\star}_{e}$ of the entanglement fidelity $F_{e}\in [0, F^{\star}_{e}]$ satisfying the quantum Fano inequality for a fixed $S_{e}$. This observation constitutes the basis of the trade-off theorem that we derive in this section. 

\subsection{Definition of Thermodynamic Efficiency of a QEC engine}
Here we define what we mean by the thermodynamic efficiency of the QEC engine, noting that this requires a careful accounting of the input heat from the measurement apparatus, which we have considered in Sec.~\ref{sub:meas_heat}. Such a problem can also be related to the problem of determining the thermodynamic efficiency of a heat engine operating between multiple heat baths (here the measurement apparatus plays the role of the third bath). 

We define the thermodynamic efficiency of our QEC engine in Figure~\ref{fig:2} as $\eta\coloneqq -W_{\text{tot}}/Q_{\text{input}}$, where $W_{\text{tot}}$ is the total work output of the engine and $Q_{\text{input}}$ is the net positive heat that is absorbed by the ``working fluid'' $SA$. In what follows, we find an explicit form of $Q_{\text{input}}$. To do so, we start by noting that the net change in the internal energy of the working fluid $SA$ is given by
\begin{equation}
    \Delta \mathscr{U}^{SA}=\Delta \mathscr{U}^{RSA}= \Delta \mathscr{U}^{RSAY},
\end{equation}
where the first equality follows from the fact that $\sigma^{R}_{f}=\sigma^{R}_{i}$ (Please see the note after Figure~\ref{fig:2}), whereas the second equality follows from the fact that the memory of $Y$ is erased after the QEC operation. Note that, although it is not obvious how to compute the work and heat contributions of $\Delta \mathscr{U}^{SA}$ of the working fluid $SA$ alone \cite{jacobs2009second, bedingham2016thermodynamic, abdelkhalek2016fundamental, faist2015minimal, kwon2019three, hayashi2017measurement}, $\Delta \mathscr{U}^{RSAY}$ does not suffer from the same problem. In fact, based on the stages described in Sec.~\ref{sub:stages}, we have
\begin{align}
    \mathscr{U}^{RSAY}_{0}-\mathscr{U}^{RSAY}_{i} &=0\; , \\
    \mathscr{U}^{RSAY}_{\text{enc}}-\mathscr{U}^{RSAY}_{0} &= W_{\text{enc}}\; , \\
    \mathscr{U}^{RSAY}_{1}-\mathscr{U}^{RSAY}_{\text{enc}} &= Q_{h}\; ,
\end{align}
where we have used the definitions of heat and work resulting from Eq.~\eqref{eqn:work_heat}. For the error-syndrome measurement stage, we have
\begin{align}
     \mathscr{U}^{RSAY}_{2}-\mathscr{U}^{RSAY}_{1} = \mathrm{Tr}\left[  (\sigma^{RSAY}_{2}-\sigma^{RSAY}_{1})\mathscr{H}^{RSAY}\right]\; ,
\end{align}
where $\mathscr{H}^{RSAY}=\mathscr{H}^{RSAY}_{t_{1}}=\mathscr{H}^{RSAY}_{t_{2}}$ is the total Hamiltonian of $RSAY$ at the beginning and end of the error-syndrome measurement stage (at times $t_{1}$ and $t_{2}$, respectively). 
Recalling the definition of heat in Eq.~\eqref{eqn:heat_def}, we arrive at $\mathscr{U}^{RSAY}_{2}-\mathscr{U}^{RSAY}_{1}=W_{\text{meas}}+Q_{\text{meas}}$, with $Q_{\text{meas}}=\langle \mathscr{H}^{M}\rangle_{1}-\langle \mathscr{H}^{M}\rangle_{2}$, where $\mathscr{H}^{M}$ is the Hamiltonian of the measurement apparatus. Finally, for the last three stages of operation of the QEC engine, we have
\begin{align}
    \mathscr{U}^{RSAY}_{3}-\mathscr{U}^{RSAY}_{2} &= W_{\text{dec}}\; , \\
    \mathscr{U}^{RSAY}_{f}-\mathscr{U}^{RSAY}_{3} &= Q_{c}\; ,\\
    \mathscr{U}^{RSAY}_{\text{erase}}-\mathscr{U}^{RSAY}_{f} &= \langle \mathscr{H}^{Y} \rangle_{i}-\langle \mathscr{H}^{Y} \rangle_{f}\equiv Q^{Y}_{\text{erase}}\; ,
\end{align}
where $\mathscr{H}^{Y}$ is the Hamiltonian of the classical register~$Y$ and $\langle \cdots \rangle_{f}$ denotes the expectation with respect to  the final reduced state of $Y$, i.e., $\sigma^{Y}_{f}=\mathrm{Tr}_{EX}[\sigma^{EXY}_{f}]=\sum_{x}p_{X}(x)\sigma_{f,x}^{Y}$, with $\sigma^{Y}_{f,x}=\sum_{y \in \mathcal{Y}_{x}}p_{Y|X}(y|x)|y\rangle \! \langle y|^{Y}$. Consequently, the above discussion yields
\begin{equation}
    \Delta \mathscr{U}^{RSAY}=W_{\text{tot}}+Q_{\text{tot}}\;, \label{eqn:internal_energy_decomp}
\end{equation}
where
\begin{align}
W_{\text{tot}} & \equiv W_{\text{enc}}+W_{\text{meas}}+W_{\text{dec}},
\\
Q_{\text{tot}} & \equiv Q_{h}+Q_{\text{meas}}+Q_{c}+Q^{Y}_{\text{erase}}
\end{align}
are the total work and heat separations of the internal energy of $RSAY$. Note that although the total dissipated heat during the erasure of both memory systems $X$ and $Y$ is given by $Q_{\text{erase}}=Q^{X}_{\text{erase}}+Q^{Y}_{\text{erase}}$, the term $Q^{X}_{\text{erase}}$ is irrelevant for the above thermodynamic analysis of $\Delta \mathscr{U}^{RSAY}$.

To identify $Q_{\text{input}}$, we further decompose $Q_{\text{meas}}= Q^{Y}_{\text{meas}}+Q^{RSA\vert Y}_{\text{meas}}-k_{B}T_{m}D(\sigma_{2}^{M}\Vert \tau^{M}_{m})$ based on our discussion in Sec.~\ref{sub:meas_heat}, where $\sigma_{2}^{M}$ is the post-measurement state of the measurement apparatus $M$, and $\tau^{M}_{m}$ is its initial thermal state, with temperature $T_{m}$. By definition, we have
\begin{align}
\frac{Q^{Y}_{\text{meas}}}{k_{B}T_{m}}
& = \Delta H(Y|X)-I(Y:M\vert X)_{\sigma_{2}}\\
& =H(Y|XM)_{\sigma_{2}} \ge 0,    
\end{align}
whereas $Q^{RSA\vert Y}_{\text{meas}}\leq 0$ is typically satisfied for any meaningful QEC engine. This is because a measurement process is supposed to transfer entropy (and hence information) from the measured system to the measurement apparatus. This, combined with $Q_{h} \ge 0$, $Q_{c}\leq 0$ (since we are interested in the heat-engine regime of operation), and $Q^{Y}_{\text{erase}}\leq 0$ (which follows from the fact that the memory $Y$ is initialized from, and erased back to, its ground state), leads to the decomposition of $Q_{\text{tot}}$ into positive $Q_{h}+Q^{Y}_{\text{meas}}$ and negative $Q^{RSA\vert Y}_{\text{meas}}+Q_{c}+Q^{Y}_{\text{erase}}$ parts. Therefore, by definition, we have $Q_{\text{input}}\coloneqq Q_{h}+Q^{Y}_{\text{meas}}$.

Next, we make a physical simplification in the error-correction setting by introducing the following assumption:

\begin{assumption} \label{assump-1}
We consider the regime of error correction with ``sufficiently high fidelity'', such that $\vert \Delta \mathscr{U}^{RSAY} \vert  \ll Q_{\operatorname{input}}$.
\end{assumption}

By ``sufficiently high fidelity'', we mean that the QEC fidelity $F_{e}\ge 1-\varepsilon$, with $\varepsilon \ll (Q_{\text{input}}/2\Vert \mathscr{H}^{S}\Vert_{\infty})^{2}$ with $\Vert \cdot \Vert_{\infty}$ denoting the infinity norm of an operator. This is justified by noting that $\Delta \mathscr{U}^{RSAY}=\Delta \mathscr{U}^{RS}$ along with
\begin{align}
    \vert \Delta \mathscr{U}^{RS} \vert &=\vert \mathrm{Tr}\left[(\sigma^{RS}_{f}-\sigma^{RS}_{i})\mathscr{H}^{RS}\right] \vert \\ & \leq \Vert \sigma^{RS}_{f}-\sigma^{RS}_{i} \Vert_{1} \cdot \Vert \mathscr{H}^{RS} \Vert_{\infty} \\ & \leq 2\Vert \mathscr{H}^{S} \Vert_{\infty} \sqrt{\varepsilon},
\end{align}
where the first inequality is a special case of H\"{o}lder's inequality and the second follows from the inequality $\frac{1}{2}\Vert \rho -\sigma \Vert_{1} \leq \sqrt{1-F(\rho, \sigma)}$, holding for every two density matrices.

Consequently, the thermodynamic efficiency of the QEC engine has the form $\eta\coloneqq-W_{\text{tot}}/Q_{\text{input}}\approx Q_{\text{tot}}/Q_{\text{input}}$, which follows from dividing both sides of Eq.~\eqref{eqn:internal_energy_decomp} by $Q_{\text{input}}$ and then using Assumption~\ref{assump-1}. Also, $\eta \leq 1$ follows from the definition of $Q_{\text{input}}$ as the positive part of $Q_{\text{tot}}$, appearing in Eq.~\eqref{eqn:internal_energy_decomp}.
\subsection{Useful Thermodynamic and Information-Theoretic Statements}
With the above definition of thermodynamic efficiency, let us now compute the important thermodynamic quantity $Q_{h}/k_{B}T_{h}+Q_{c}/k_{B}T_{c}$ appearing in Theorem~\ref{th:entropy-change}. From $\eta=Q_{\text{tot}}/Q_{\text{input}}$, it follows that 
\begin{equation}
    \vert Q_{c} \vert = (1-\eta)(Q_{h}+Q^{Y}_{\text{meas}})-\vert Q^{RSA\vert Y}_{\text{meas}}\vert -\vert Q^{Y}_{\text{erase}}\vert\;, 
\end{equation}
and therefore
\begin{multline}
    \frac{Q_{h}}{k_{B}T_{h}}+\frac{Q_{c}}{k_{B}T_{c}}=\frac{Q_{h}}{k_{B}T_{h}}\left( \frac{\eta-\eta_{C}}{1-\eta_{C}} \right)\\+\frac{\vert Q^{RSA\vert Y}_{\text{meas}}\vert +\vert Q^{Y}_{\text{erase}}\vert-(1-\eta)Q^{Y}_{\text{meas}}}{k_{B}T_{c}}\;, \label{eqn:thermo_quantity}
\end{multline}
where $\eta_{C}=1-T_{c}/T_{h}$ is the Carnot efficiency.

One more relation that we shall need is the upper bound on the Groenewold information gain derived in \cite{buscemi2016approximate}. This is given by (for a specific realization $x \in \mathcal{X}$ of the initial measurement stage)
\begin{equation}
    I_{G,x} \leq H(Y|X=x)_{\sigma_{2}}-D(\sigma^{E}_{1,x}\Vert \mathcal{N}^{\dagger}_{|x}\circ \mathcal{N}_{|x}(\sigma_{1,x}^{E}))\;, \label{eqn:BDW_upper}
\end{equation}
where we have used the notation $\mathcal{N}_{|x}\equiv \mathcal{N}^{SA\rightarrow SAY}_{|x}$ from Eq.~\eqref{eqn:instrument_notation}. The relative entropy term is not necessarily positive because the input $\mathcal{N}^{\dagger}_{|x}\circ \mathcal{N}_{|x}(\sigma_{1,x}^{E})$ is not a density matrix. This can be seen by normalizing the positive semi-definite operator $\mathcal{N}^{\dagger}_{|x}\circ \mathcal{N}_{|x}(\sigma_{1,x}^{E})$ by dividing by its trace, leading to \cite{buscemi2020thermodynamic}
\begin{multline}
    D(\sigma^{E}_{1,x}\Vert \mathcal{N}^{\dagger}_{|x}\circ \mathcal{N}_{|x}(\sigma_{1,x}^{E}))=D(\sigma^{E}_{1,x}\Vert \tilde{\sigma}^{E}_{1,x})\\-\ln \mathrm{Tr}\!\left[\mathcal{N}^{\dagger}_{|x}\circ \mathcal{N}_{|x}(\sigma^{E}_{1,x})\right]\;, 
\end{multline}
where the first term is now non-negative (because $\tilde{\sigma}^{E}_{1,x}$ is a density matrix) and the second term is the negative log efficacy $\mathscr{E}(\mathcal{N}_{|x}; \sigma^{E}_{1,x})$ of the quantum instrument $\mathcal{N}_{|x}$ with respect to  the density matrix $\sigma^{E}_{1,x}$, as defined in Eq.~\eqref{eqn:efficacy}. Using these relations, we arrive at
\begin{multline}
     H(Y|X)_{\sigma_{2}}-\sum_{x\in \mathcal{X}}p_{X}(x)I_{G,x}  \\ \ge \sum_{x\in \mathcal{X}}p_{X}(x)\left[-\ln \mathscr{E}(\mathcal{N}_{|x}; \sigma^{E}_{1,x})\right] \; . \label{eqn:info_theo_quantity}
\end{multline}

\subsection{Lower Bound on Entropy Exchange}

Due to the importance of entropy exchange $S_{e}$ in arriving at a trade-off relation (as emphasised in Sec.~\ref{sub:fano}), we express this quantity as a function of the thermodynamic efficiency $\eta$, defined above. To start, we make the following remark that is justified in an experimental setting, since access to cooling resources is typically limited.

\begin{remark} \label{remark1}
We treat the cold bath $B_{c}$ as a shared cooling resource among various subsystems in the QEC setting. That is, $B_{c}$ is not only used to recycle the ancillary system $A$, but it can also be used to erase the memory of the classical registers $X$ and $Y$.
\end{remark}

This constitutes the basis for the following assumption:

\begin{assumption}
\label{assump-2}
Pre-erasure, the entropy of the memory~$X$ is negligible with respect to  the entropy of $Y$, namely, $H(X|E)_{\sigma_{f}} \ll  H(Y|EX)_{\sigma_{f}}$.
\end{assumption}

Here we offer the following physical justification. Due to Remark~\ref{remark1}, we expect the shared cold bath (amongst systems $A$, $X$, and $Y$) to be sufficiently cold to accomplish the erasure task of the classical memories. Therefore, we expect the entropy of $X$ (which depends on the statistics of the initial measurement stage) to be negligible relative to that of $Y$, which is determined by the statistics of the error-syndrome measurement, the latter being rich in most realistic conditions.

From the above two assumptions, it follows that in Theorem~\ref{th:info_theo}, we have the approximation
\begin{align}
H(XY|E)_{\sigma_{f}}& =H(X|E)_{\sigma_{f}}+H(Y|EX)_{\sigma_{f}}\\
& \approx H(Y|EX)_{\sigma_{f}}.    
\end{align}
Moreover, we can further simplify $H(Y|EX)_{\sigma_{f}}$ as
\begin{align}
    H(Y|EX)_{\sigma_{f}} & = H(Y|RSAB_{h}B_{c}X)_{\sigma_{f}}\\
    & =H(Y|RSB_{c}X)_{\sigma_{f}}\\
    & =H(Y|RSAX)_{\sigma_{3}}
    \\& =H(Y|X)_{\sigma_{2}}-I(Y:RSA|X)_{\sigma_{3}}\;.
\end{align}
Given that we can achieve QEC with sufficiently high fidelity (Assumption~\ref{assump-1}), $\sigma^{RSA}_{3,x}$ should be approximately back to the post-encoded state $\sigma^{RSA}_{\text{enc},x}$. It follows that $I(Y:RSA|X)_{\sigma_{3}} \ll H(Y|X)_{\sigma_{2}}$, and hence $H(Y|EX)_{\sigma_{f}}\approx H(Y|X)_{\sigma_{2}}$. 
Finally, recall that the latter term also appears in Landauer's bound (e.g., see Eq.~\eqref{eqn:landauer_eqn}, for a given realization $x$) which puts a lower bound on the erasure heat of the memory $Y$, as $\vert Q^{Y}_{\text{erase}}\vert \ge k_{B}T_{c}H(Y|X)_{\sigma_{2}}$. This holds because the erasure process of $Y$, for a given value $x$ of the initial measurement stage, can be carried out unitarily. Averaging Eq.~\eqref{eqn:landauer_eqn} over the realizations $x\in \mathcal{X}$, we arrive at the desired inequality. Therefore, we make the following assumption.

\begin{assumption}
\label{assump-3}
The Landauer bound is reachable approximately in experiments, namely $\vert Q^{Y}_{\operatorname{erase}}\vert \approx  k_{B}T_{c}H(Y|X)_{\sigma_{2}}$.
\end{assumption}

From this, it follows that $H(XY|E)_{\sigma_{f}}\approx H(Y|X)_{\sigma_{2}} \approx \vert Q^{Y}_{\text{erase}}\vert/k_{B}T_{c}$. Substituting this into Theorem~\ref{th:info_theo}, we get $\Delta H(E)_{\sigma_{f}}\approx H(Y|X)_{\sigma_{2}}-\sum_{x\in \mathcal{X}}p_{X}(x)I_{G,x}-\vert Q^{Y}_{\text{erase}}\vert/k_{B}T_{c}$. This, along with Theorem~\ref{th:entropy-change} and Eq.~\eqref{eqn:thermo_quantity}, gives us
\begin{multline}
    S_{e}\ge \frac{Q_{h}}{k_{B}T_{h}}\left( \frac{\eta-\eta_{C}}{1-\eta_{C}} \right)+\frac{\vert Q^{RSA\vert Y}_{\text{meas}}\vert -(1-\eta)Q^{Y}_{\text{meas}}}{k_{B}T_{c}}\\+\left(H(Y|X)_{\sigma_{2}}-\sum_{x\in \mathcal{X}}p_{X}(x)I_{G,x}\right)\;.
\end{multline}
A simple manipulation via $(1-\eta)Q_{\text{meas}}^{Y}=(1-\eta_{C})Q_{\text{meas}}^{Y}-(\eta-\eta_{C})Q_{\text{meas}}^{Y}$ yields
\begin{multline}
    S_{e}\ge \frac{Q_{\text{input}}}{k_{B}T_{h}}\left( \frac{\eta-\eta_{C}}{1-\eta_{C}} \right)+\frac{\vert Q^{RSA\vert Y}_{\text{meas}}\vert -(1-\eta_{C})Q^{Y}_{\text{meas}}}{k_{B}T_{c}}\\+\left(H(Y|X)_{\sigma_{2}}-\sum_{x\in \mathcal{X}}p_{X}(x)I_{G,x}\right)\;. 
\end{multline}
The middle term on the right-hand side of the above equality becomes positive when $|Q^{RSA|Y}_{\text{meas}}|\ge \frac{T_{c}}{T_{h}}Q^{Y}_{\text{meas}}$, which implies that the measurement process leads to a ``sufficient'' dissipation of heat from the encoded system to the measurement apparatus.  Therefore, the last assumption we make is given as follows.

\begin{assumption}
\label{assump-4}
The thermodynamic constraint $|Q^{RSA|Y}_{\operatorname{meas}}|\ge \frac{T_{c}}{T_{h}}Q^{Y}_{\operatorname{meas}}$ applies to the measurement process. The equivalent information-theoretic constraint is given by $I(RSA:M|XY)_{\sigma_{2}}+\sum_{x}p_{X}(x)I_{G,x} \ge \frac{T_{c}}{T_{h}}H(Y|X)_{\sigma_{2}}$.
\end{assumption}

This assumption, along with Eq.~\eqref{eqn:info_theo_quantity}, yields the final inequality
\begin{equation}
    S_{e} \ge \frac{Q_{\text{input}}}{k_{B}T_{h}}\left( \frac{\eta-\eta_{C}}{1-\eta_{C}} \right) +\sum_{x\in \mathcal{X}}p_{X}(x)\left[-\ln \mathscr{E}(\mathcal{N}_{|x}; \sigma^{E}_{1,x})\right]\; . 
\end{equation}

Therefore, we summarize the main result of this section as
\begin{theorem}
\label{th:ent_ex}
    Given a quantum error-correcting engine shown in Figure~\ref{fig:2} and the set of physical Assumptions~\ref{assump-1}--\ref{assump-4} provided above, the entropy exchange is bounded from below by the thermodynamic efficiency of the engine and the average negative log efficacy of the general quantum measurement implemented, namely
    \begin{equation}
        S_{e} \ge  \frac{Q_{\operatorname{input}}}{k_{B}T_{h}}\left( \frac{\eta-\eta_{C}}{1-\eta_{C}} \right) +\sum_{x\in \mathcal{X}}p_{X}(x)\left[-\ln \mathscr{E}(\mathcal{N}_{|x}; \sigma^{E}_{1,x})\right] \;. \label{eqn:ent_ex_final2}
    \end{equation}
\end{theorem}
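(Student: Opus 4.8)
The plan is to eliminate the engine entropy change $H(E)_{\sigma_{f}}-H(E)_{\sigma_{i}}$ between the two expressions already established for it, namely the information-theoretic form of Theorem~\ref{th:info_theo} and the thermodynamic form of Theorem~\ref{th:entropy-change}, and then to reduce the resulting relation to the claimed bound using Assumptions~\ref{assump-1}--\ref{assump-4} together with the efficacy estimate~\eqref{eqn:info_theo_quantity}. First I would solve Theorem~\ref{th:entropy-change} for $S_{e}$ and discard the two manifestly non-negative terms $I(RS:B_{h}B_{c})_{\sigma_{f}}$ and $\Gamma(B_{h}B_{c})_{\sigma_{f}}$, yielding
\[
S_{e}\ge H(E)_{\sigma_{f}}-H(E)_{\sigma_{i}}+\frac{Q_{h}}{k_{B}T_{h}}+\frac{Q_{c}}{k_{B}T_{c}}\;.
\]
Into this I would substitute the information-theoretic expression of Theorem~\ref{th:info_theo}. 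The crux is evaluating $H(XY|E)_{\sigma_{f}}$: using Assumption~\ref{assump-2} to drop $H(X|E)_{\sigma_{f}}$, the chain-rule identity $H(Y|EX)_{\sigma_{f}}=H(Y|X)_{\sigma_{2}}-I(Y:RSA|X)_{\sigma_{3}}$, and the high-fidelity condition of Assumption~\ref{assump-1} (which drives $\sigma^{RSA}_{3,x}$ back toward the encoded state so that $I(Y:RSA|X)_{\sigma_{3}}\ll H(Y|X)_{\sigma_{2}}$), I obtain $H(XY|E)_{\sigma_{f}}\approx H(Y|X)_{\sigma_{2}}$, which by Assumption~\ref{assump-3} equals $|Q^{Y}_{\text{erase}}|/k_{B}T_{c}$.

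Next I would insert the efficiency decomposition~\eqref{eqn:thermo_quantity} for $Q_{h}/k_{B}T_{h}+Q_{c}/k_{B}T_{c}$. At this stage the two $|Q^{Y}_{\text{erase}}|/k_{B}T_{c}$ contributions---one entering through $-H(XY|E)_{\sigma_{f}}$ and one appearing with the opposite sign inside~\eqref{eqn:thermo_quantity}---cancel, leaving a bound built from a term $\tfrac{Q_{h}}{k_{B}T_{h}}\tfrac{\eta-\eta_{C}}{1-\eta_{C}}$, the residual measurement-heat term $\bigl[\,|Q^{RSA|Y}_{\text{meas}}|-(1-\eta)Q^{Y}_{\text{meas}}\,\bigr]/k_{B}T_{c}$, and the information-gain combination $H(Y|X)_{\sigma_{2}}-\sum_{x}p_{X}(x)I_{G,x}$. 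The key algebraic move is to write $1-\eta=(1-\eta_{C})-(\eta-\eta_{C})$; since $1-\eta_{C}=T_{c}/T_{h}$, the piece $(\eta-\eta_{C})Q^{Y}_{\text{meas}}/k_{B}T_{c}$ merges with the $Q_{h}$ term to promote $Q_{h}\to Q_{\text{input}}=Q_{h}+Q^{Y}_{\text{meas}}$ in the prefactor of $(\eta-\eta_{C})/(1-\eta_{C})$.

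To close, Assumption~\ref{assump-4}---equivalently $|Q^{RSA|Y}_{\text{meas}}|\ge\tfrac{T_{c}}{T_{h}}Q^{Y}_{\text{meas}}=(1-\eta_{C})Q^{Y}_{\text{meas}}$---renders the remaining measurement-heat term non-negative, so it may be dropped, and the efficacy estimate~\eqref{eqn:info_theo_quantity} lower-bounds $H(Y|X)_{\sigma_{2}}-\sum_{x}p_{X}(x)I_{G,x}$ by the average negative log efficacy, which delivers~\eqref{eqn:ent_ex_final2}. The main obstacle is not any individual inequality, since the heavy lifting resides in the two entropy-change theorems and the Groenewold bound~\eqref{eqn:BDW_upper} already in hand; rather, it is controlling the approximations. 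One must check that the three ``$\approx$'' replacements licensed by Assumptions~\ref{assump-1}--\ref{assump-3} are mutually consistent and, in particular, that the discarded error terms---chiefly $I(Y:RSA|X)_{\sigma_{3}}$ and $H(X|E)_{\sigma_{f}}$---are genuinely negligible in the \emph{same} fidelity regime quantified by $\varepsilon\ll(Q_{\text{input}}/2\Vert\mathscr{H}^{S}\Vert_{\infty})^{2}$, so that the clean final inequality is not an artifact of neglecting terms of comparable size to those retained.
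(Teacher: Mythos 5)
Your proposal is correct and follows essentially the same route as the paper: combining Theorems~\ref{th:info_theo} and~\ref{th:entropy-change} (dropping the non-negative $I(RS:B_{h}B_{c})_{\sigma_{f}}$ and $\Gamma(B_{h}B_{c})_{\sigma_{f}}$), using Assumptions~\ref{assump-1}--\ref{assump-3} to set $H(XY|E)_{\sigma_{f}}\approx H(Y|X)_{\sigma_{2}}\approx |Q^{Y}_{\text{erase}}|/k_{B}T_{c}$, inserting Eq.~\eqref{eqn:thermo_quantity} so the erasure-heat terms cancel, rewriting $(1-\eta)Q^{Y}_{\text{meas}}$ via $(1-\eta_{C})-(\eta-\eta_{C})$ to promote $Q_{h}\to Q_{\text{input}}$, and finishing with Assumption~\ref{assump-4} and the efficacy bound~\eqref{eqn:info_theo_quantity}. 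Your closing caveat about verifying that the neglected terms are uniformly small in the same high-fidelity regime is a fair point about the approximations, which the paper likewise treats as physically motivated assumptions rather than controlled error bounds.
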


\subsection{Derivation of a Triple Trade-off Relation}
Finally, trade-off relations between entanglement fidelity, thermodynamic efficiency, and efficacy now appear as a direct consequence of the quantum Fano inequality discussion in Sec.~\ref{sub:fano} and Theorem~\ref{th:ent_ex}. We formulate this as the following triple trade-off relation:

\begin{corollary}
\label{th:three-way}
    Given a quantum error-correcting engine shown in Figure~\ref{fig:2} and the set of physical Assumptions~\ref{assump-1}--\ref{assump-4} provided above, there exists a triple trade-off between the maximum achievable error-correcting fidelity $F^{\star}_{e}$, the super-Carnot efficiency $\eta$ of this error-correcting engine, and the average negative log efficacy of the general quantum measurement, namely
    \begin{multline}
        H(F_{e})+(1-F_{e})\ln(d^{2}-1)\ge  \frac{Q_{\operatorname{input}}}{k_{B}T_{h}}\left( \frac{\eta-\eta_{C}}{1-\eta_{C}} \right) \\+\sum_{x\in \mathcal{X}}p_{X}(x)\left[-\ln \mathscr{E}(\mathcal{N}_{|x}; \sigma^{E}_{1,x})\right] \;. \label{eqn:three_way}
    \end{multline}
\end{corollary}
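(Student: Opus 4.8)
The plan is to obtain the stated inequality simply by chaining together two facts that are already in hand, since the corollary is a direct consequence rather than a fresh computation. The first ingredient is the quantum Fano inequality \eqref{eqn:fano}, which upper-bounds the entropy exchange in terms of the entanglement fidelity, $H(F_{e})+(1-F_{e})\ln(d^{2}-1)\ge S_{e}$. The second ingredient is Theorem~\ref{th:ent_ex}, which, under Assumptions~\ref{assump-1}--\ref{assump-4}, lower-bounds the \emph{same} entropy exchange by the super-Carnot efficiency excess and the average negative log efficacy. First I would simply observe that transitivity of $\ge$ across these two bounds yields exactly \eqref{eqn:three_way}, with $S_{e}$ acting as the intermediary that is squeezed from above by the fidelity term and from below by the efficiency and efficacy terms.

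The remaining work is interpretive rather than computational: to read the chained inequality as a genuine three-way trade-off. For this I would invoke the geometric picture of Sec.~\ref{sub:fano}: the left-hand side $H(F_{e})+(1-F_{e})\ln(d^{2}-1)$ is a concave function of $F_{e}$ on $[0,1]$ that vanishes at $F_{e}=1$, while the right-hand side is a fixed number determined by $\eta$ and the measurement. Hence, for each fixed value of the right-hand side, the admissible fidelities form an interval $[0,F^{\star}_{e}]$, and raising the right-hand side shrinks this interval. Since the right-hand side grows monotonically both with the super-Carnot excess $\eta-\eta_{C}$ (through the factor $(\eta-\eta_{C})/(1-\eta_{C})$, with $Q_{\operatorname{input}}\ge 0$) and with the average negative log efficacy $\sum_{x}p_{X}(x)[-\ln\mathscr{E}(\mathcal{N}_{|x};\sigma^{E}_{1,x})]$, increasing either quantity forces $F^{\star}_{e}$ down, which is precisely the asserted trade-off.

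Because the corollary is a one-line combination, I do not expect any genuine obstacle at this stage; the conceptual and technical difficulty has already been absorbed into Theorem~\ref{th:ent_ex} (the Landauer-saturation step of Assumption~\ref{assump-3}, the efficacy bound \eqref{eqn:BDW_upper}, and the thermodynamic decomposition of $Q_{\operatorname{tot}}$). The only points requiring care are bookkeeping ones: checking that the two inequalities point in compatible directions so that transitivity is legitimate, and confirming the sign conventions ($Q_{\operatorname{input}}\ge 0$, $\eta\le 1$, and non-negativity of the efficacy term for the information-destroying measurements of interest) so that the right-hand side is a meaningful, non-vacuous lower bound in the super-Carnot regime. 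If anything could go subtly wrong, it is the passage from the bound on $F_{e}$ to the statement about the \emph{maximum} achievable fidelity $F^{\star}_{e}$, which relies on the concavity and boundary behaviour of the binary-entropy expression rather than on any further physics.
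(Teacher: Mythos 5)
Your proposal is correct and follows exactly the paper's own route: the corollary is obtained by chaining the quantum Fano inequality $H(F_{e})+(1-F_{e})\ln(d^{2}-1)\ge S_{e}$ with the lower bound on $S_{e}$ from Theorem~\ref{th:ent_ex}, with the entropy exchange as the squeezed intermediary. Your additional remarks on the concavity-based reading of $F^{\star}_{e}$ simply restate the interpretation the paper gives in Sec.~\ref{sub:fano}, so nothing further is needed.
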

From here, two corollaries follow:
\begin{corollary} \label{th:trade_off}
    Given a quantum error-correcting engine shown in Figure~\ref{fig:2} and the set of physical Assumptions~\ref{assump-1}--\ref{assump-4} provided above, along with the assumption that the quantum instrument used has efficacy larger than one, there exists a direct trade-off between the maximum achievable error-correcting fidelity $F^{\star}_{e}$ and the super-Carnot efficiency $\eta$ of this error-correcting engine, namely
    \begin{equation}
        \eta > \eta_{C} \Rightarrow F_{e}^{\star}<1 \hspace{0.2cm} \text{or equivalently} \hspace{0.2cm}  F_{e}^{\star}=1 \Rightarrow \eta \leq \eta_{C}\;. \label{eqn:new_tradeoff}
    \end{equation}
\end{corollary}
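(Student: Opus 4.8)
The plan is to derive Corollary~\ref{th:trade_off} directly from the triple trade-off relation of Corollary~\ref{th:three-way} [Eq.~\eqref{eqn:three_way}], by identifying the regime in which its right-hand side is strictly positive. The whole argument reduces to a single observation about the quantum Fano inequality~\eqref{eqn:fano}: writing $f(F_{e}):=H(F_{e})+(1-F_{e})\ln(d^{2}-1)$ for its left-hand side, $f$ is continuous, vanishes at $F_{e}=1$, and is strictly positive on a left-neighbourhood of $1$ (as sketched in Sec.~\ref{sub:fano}). Consequently, the largest fidelity $F^{\star}_{e}$ compatible with the bound $f(F_{e})\ge(\text{RHS})$ equals $1$ if and only if the right-hand side is non-positive; so to prove $F^{\star}_{e}<1$ it suffices to show that the right-hand side of~\eqref{eqn:three_way} is strictly positive.

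First I would examine the two contributions to that right-hand side separately. In the super-Carnot regime $\eta>\eta_{C}$ the efficiency term $\frac{Q_{\text{input}}}{k_{B}T_{h}}\left(\frac{\eta-\eta_{C}}{1-\eta_{C}}\right)$ is strictly positive, since $Q_{\text{input}}>0$ (it is defined as the positive part of $Q_{\text{tot}}$) and $1-\eta_{C}=T_{c}/T_{h}>0$. It therefore remains to control the averaged negative log-efficacy $\sum_{x}p_{X}(x)\left[-\ln\mathscr{E}(\mathcal{N}_{|x};\sigma^{E}_{1,x})\right]$, whose sign is fixed by whether the instrument's efficacy lies below or above $1$. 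This is exactly the point at which the efficacy hypothesis of the corollary is used to conclude that the sum of the two terms is strictly positive, and hence, via $f$, that $F^{\star}_{e}<1$. The contrapositive form $F^{\star}_{e}=1\Rightarrow\eta\le\eta_{C}$ follows by the same chain read backwards: $F^{\star}_{e}=1$ forces $S_{e}=0$ through $f$, and substituting $S_{e}=0$ into Theorem~\ref{th:ent_ex} bounds $\frac{\eta-\eta_{C}}{1-\eta_{C}}$ from above by a multiple of the averaged log-efficacy.

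The hard part is the bookkeeping of the efficacy term, i.e., guaranteeing that it cannot cancel the strictly positive super-Carnot efficiency term. When the averaged log-efficacy is non-positive this is automatic, but when the efficacy exceeds $1$ the term competes with the efficiency contribution, so one must argue that the latter dominates; here I would lean on Assumption~\ref{assump-4}, which lower-bounds the heat $\vert Q^{RSA\vert Y}_{\text{meas}}\vert$ dissipated into the apparatus and thereby constrains the efficacy contribution relative to the super-Carnot gain already built into Eq.~\eqref{eqn:ent_ex_final2}. Getting this domination (and the precise threshold it imposes on $\eta-\eta_{C}$) right is the crux; the rest of the corollary is then just the monotonicity of $f$ near $F_{e}=1$ together with the sign of $Q_{\text{input}}$.
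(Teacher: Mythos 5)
Your first paragraph reproduces what is, in effect, the paper's entire argument: Corollary~\ref{th:trade_off} is presented as an immediate consequence of Corollary~\ref{th:three-way}, using exactly the observation you make, namely that the Fano-type left-hand side vanishes at $F_{e}=1$ while the super-Carnot term $\frac{Q_{\operatorname{input}}}{k_{B}T_{h}}\bigl(\frac{\eta-\eta_{C}}{1-\eta_{C}}\bigr)$ is strictly positive when $\eta>\eta_{C}$, so that $F_{e}^{\star}<1$ follows as soon as the averaged negative log-efficacy does not spoil the sign of the right-hand side of~\eqref{eqn:three_way}. Up to that point you and the paper agree.

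The gap is in your third paragraph. Assumption~\ref{assump-4} cannot supply the ``domination'' you want: it has already been spent in deriving Theorem~\ref{th:ent_ex} (it is precisely what removes the middle heat term before Eq.~\eqref{eqn:ent_ex_final2}), and it constrains $\vert Q^{RSA\vert Y}_{\text{meas}}\vert$, i.e.\ the Groenewold information gain and conditional mutual information, not the efficacy term, which enters only afterwards through the bound~\eqref{eqn:info_theo_quantity}. Re-invoking it therefore gives no relation between $\frac{Q_{\operatorname{input}}}{k_{B}T_{h}}\frac{\eta-\eta_{C}}{1-\eta_{C}}$ and $\sum_{x}p_{X}(x)\ln\mathscr{E}(\mathcal{N}_{|x};\sigma^{E}_{1,x})$, and indeed when the efficacy exceeds one the right-hand side of~\eqref{eqn:three_way} can be non-positive for $\eta$ arbitrarily little above $\eta_{C}$, so no threshold-free implication $\eta>\eta_{C}\Rightarrow F_{e}^{\star}<1$ can be extracted along the route you sketch. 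The argument the paper actually relies on is the opposite sign regime, made explicit in the remark after Corollary~\ref{th:trade_off(2)}: when the instruments have efficacy \emph{at most} one (e.g.\ subunital or projective syndrome measurements), the averaged negative log-efficacy is non-negative, the right-hand side of~\eqref{eqn:three_way} is strictly positive for $\eta>\eta_{C}$, and the conclusion is immediate; the phrase ``efficacy larger than one'' in the corollary's hypothesis is inconsistent with this and should be read as ``not larger than one.'' So your instinct that the literal hypothesis does not by itself make the right-hand side positive is correct, but the repair is to fix the sign condition on the efficacy, not to attempt a domination argument from Assumption~\ref{assump-4}, which would fail.
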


\begin{corollary} \label{th:trade_off(2)}
    Given a quantum error-correcting engine shown in Figure~\ref{fig:2} and the set of physical Assumptions~\ref{assump-1}--\ref{assump-4} provided above, along with the assumption that the entanglement fidelity of the engine is equal to one, there exists a direct trade-off between the super-Carnot efficiency $\eta$ of the error-correcting engine, and the average negative log efficacy of the general quantum measurement, namely
    \begin{equation}
        \frac{Q_{\operatorname{input}}}{k_{B}T_{h}}\left( \frac{\eta-\eta_{C}}{1-\eta_{C}} \right) +\sum_{x\in \mathcal{X}}p_{X}(x)\left[-\ln \mathscr{E}(\mathcal{N}_{|x}; \sigma^{E}_{1,x})\right] \leq 0\;. \label{eqn:new_tradeoff(2)}
    \end{equation}
\end{corollary}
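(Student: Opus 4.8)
The plan is to obtain this corollary as an immediate specialization of Theorem~\ref{th:ent_ex} to the perfect-correction case $F_e = 1$. The only substantive input beyond that theorem is the observation that perfect entanglement fidelity forces the entropy exchange to vanish, after which the claimed inequality follows by direct substitution.

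First I would establish that $F_e = 1 \Rightarrow S_e = 0$. This follows by combining two facts already available above: (i) the quantum Fano inequality~\eqref{eqn:fano}, $H(F_e) + (1-F_e)\ln(d^2-1) \ge S_e$, whose left-hand side vanishes at $F_e = 1$ because the binary entropy satisfies $H(1) = 0$ and the linear term carries the prefactor $(1-F_e) = 0$; and (ii) the non-negativity of the von Neumann entropy, $S_e = H(RS)_{\sigma_f} \ge 0$. Together these squeeze $S_e$ to exactly zero whenever $F_e = 1$, exactly the implication $F_e = 1 \Rightarrow S_e = 0$ already noted after Eq.~\eqref{eqn:fano}.

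Second, I would substitute $S_e = 0$ into the lower bound~\eqref{eqn:ent_ex_final2} of Theorem~\ref{th:ent_ex}, which under Assumptions~\ref{assump-1}--\ref{assump-4} reads $S_e \ge \frac{Q_{\operatorname{input}}}{k_B T_h}\left(\frac{\eta - \eta_C}{1-\eta_C}\right) + \sum_{x\in\mathcal{X}} p_X(x)\bigl[-\ln \mathscr{E}(\mathcal{N}_{|x}; \sigma^E_{1,x})\bigr]$. Replacing the left-hand side by zero yields precisely inequality~\eqref{eqn:new_tradeoff(2)}, and the inheritance of the four physical assumptions from Theorem~\ref{th:ent_ex} is automatic.

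Since this is a direct corollary, there is no genuine computational obstacle; the only point requiring care is logical rather than technical. The Fano inequality by itself bounds $S_e$ only from above, so it alone would leave $S_e$ merely non-positive; the non-negativity of entropy is genuinely needed to pin $S_e$ to zero. Once this sandwich is in place the corollary follows by pure substitution, and I would simply remark that the result is the $F_e = 1$ specialization of the triple trade-off~\eqref{eqn:three_way} of Corollary~\ref{th:three-way}.
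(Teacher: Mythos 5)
Your proposal is correct and follows essentially the paper's own route: the corollary is just the specialization at $F_e=1$ of the chain obtained by combining the quantum Fano inequality~\eqref{eqn:fano} with the lower bound on $S_e$ in Theorem~\ref{th:ent_ex} (equivalently, setting $F_e=1$ in Corollary~\ref{th:three-way} so that the left-hand side of~\eqref{eqn:three_way} vanishes). The only cosmetic difference is your detour through pinning $S_e$ exactly to zero via non-negativity of the von Neumann entropy; this step is valid but not needed, since the Fano bound alone gives $0 \ge S_e \ge$ the right-hand side of~\eqref{eqn:ent_ex_final2} at $F_e=1$, which already yields~\eqref{eqn:new_tradeoff(2)}.
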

The latter implies that for all subunital quantum instruments $\mathcal{N}_{|x}$ (which includes efficient measurements), we must have $\eta \leq \eta_{C}$. This is in contrast with Ref.~\cite{sagawa2008second}. In fact, operating in the super-Carnot efficiency regime is still physically possible; however, it necessarily requires the use of superunital quantum instruments. Further comparison with Ref.~\cite{sagawa2008second} is made in Sec.~\ref{sec:discussion}.

\subsection{Examples}
\label{sub:example}

\subsubsection{Bit-Flip Channel}
Here we consider the simplest QEC code: the 3-qubit code corresponding to the bit-flip noise channel $\mathcal{T}(\cdot)=(1-p)(\cdot)+pX(\cdot)X$, where $0\leq p \leq 1$ is the bit-flip probability \cite{nielsen2002quantum}. This can model the so-called flip-flop interactions between spins. First, note that the bit-flip channel can be written as a thermal operation Eq.~\eqref{eqn:thermal_op}. This can be accomplished by having the energy spectrum of the hot bath to be comprised of two energies $\epsilon_{0}$ and $\epsilon_{1}$ (where the excited state can be highly degenerate), with projectors $P_{0}$ and $P_{1}$, respectively. Hence, we choose the thermal state of the bath to be $\tau^{B_{h}}_{h}=(1-p)P_{0}+pP_{1}$ and the unitary $U=I\otimes P_{0}+X\otimes P_{1}$.

The system of interest $S$ is a single qubit prepared in an arbitrary pure state $|\psi\rangle^{S}=a|0\rangle^{S}+b|1\rangle^{S}$ (hence there is no need for a purifying system $R$), and the ancillary system is two non-interacting qubits, each prepared in the thermal state $\tau^{A_{1}}_{c}=\pi_{1}|0\rangle \!\langle 0|^{A_{1}}+(1-\pi_{1})|1\rangle \!\langle 1|^{A_{1}}$ and $\tau^{A_{2}}_{c}=\pi_{2}|0\rangle \!\langle 0|^{A_{2}}+(1-\pi_{2})|1\rangle \!\langle 1|^{A_{2}}$, with Gibbs distributions $\{\pi_{1}, (1-\pi_{1}) \}$ and $\{ \pi_{2}, (1-\pi_{2}) \}$ respectively, having the same temperatures $T_{c}$ but slightly different energy separations (to avoid degeneracies in the combined ancillary system $A_{1}A_{2}$). 
The initial projective measurement of the ancillary system $A$ can have four different outcomes $\{00, 01, 10, 11 \}$ with probabilities $\{\pi_{1}\pi_{2}, \pi_{1}(1-\pi_{2}), (1-\pi_{1})\pi_{2}, (1-\pi_{1})(1-\pi_{2}) \}$, respectively. Therefore, the encoding stage starts by applying the unitaries $\{ I_{1}\otimes I_{2}, I_{1}\otimes X_{2}, X_{1}\otimes I_{2}, X_{1}\otimes X_{2} \}$ based on the corresponding measurement outcomes, so that we can arrive at the same post-measurement state $|00\rangle \! \langle 00|^{A}$. After this step, we complete the unitary encoding by applying two controlled CNOT gates (one on each ancillary qubit), leading to 
\begin{multline}
    (a|0\rangle^{S}+b|1\rangle^{S})\otimes |0\rangle^{A_{1}}\otimes |0\rangle^{A_{2}} \\ \rightarrow a|000\rangle^{SA}+b|111\rangle^{SA} \equiv |\Psi\rangle^{SA},
\end{multline}
which is independent of $x$. Next, we subject the encoded system to the bit-flip channel, acting independently on each qubit, namely $\mathcal{T}^{\otimes 3}(|\Psi\rangle \! \langle \Psi|^{SA})$. We follow this step by conducting the error-syndrome measurement, described by the four projectors 
\begin{align}
    &\begin{cases}
        &\Pi^{SA}_{y=0} =|000\rangle \! \langle 000|^{SA}+|111\rangle \! \langle 111|^{SA}\; \nonumber, \\ &\text{decoded by} \hspace{0.2cm} U^{SA}_{\text{dec}, y=0}=I^{SA}\; , 
    \end{cases} \\
    &\begin{cases}
        &\Pi^{SA}_{y=1} =|100\rangle \! \langle 100|^{SA}+|011\rangle \! \langle 011|^{SA}\;, \nonumber \\ &\text{decoded by} \hspace{0.2cm} U^{SA}_{\text{dec}, y=1}=X^{S}\otimes I^{A_{1}}\otimes I^{A_{2}}\; ,
    \end{cases} \\
    &\begin{cases}
        &\Pi^{SA}_{y=2} =|010\rangle \! \langle 010|^{SA}+|101\rangle \! \langle 101|^{SA}\;, \nonumber \\ &\text{decoded by} \hspace{0.2cm} U^{SA}_{\text{dec}, y=2}=I^{S}\otimes X^{A_{1}}\otimes I^{A_{2}}\; ,
    \end{cases} \\
    &\begin{cases}
        &\Pi^{SA}_{y=3} =|001\rangle \! \langle 001|^{SA}+|110\rangle \! \langle 110|^{SA}\;, \nonumber \\ &\text{decoded by} \hspace{0.2cm} U^{SA}_{\text{dec}, y=3}=I^{S}\otimes I^{A_{1}}\otimes X^{A_{2}}\; ,
    \end{cases}
\end{align}
corresponding to no bit-flip error, and a single bit-flip error in each qubit. Hence, the measurement statistics are given by $p_{Y|X}(y|x)=p_{Y}(y)=\mathrm{Tr}\left[\mathcal{T}^{\otimes 3}(|\Psi\rangle \! \langle \Psi|^{SA}) \Pi^{SA}_{y} \right]$ for $y=\{0, 1, 2, 3 \}$. Therefore, we require a memory system $Y$ with at least four orthonormal basis vectors $\{ |y\rangle \}^{3}_{y=0}$ (e.g., a system of two qubits).

In order to check the validity of the assumptions used to derive Theorem~\ref{th:three-way}, we need only check Assumptions~\ref{assump-2} and \ref{assump-4}, since Assumption~\ref{assump-1} depends on the strength of the noise present, Assumption~\ref{assump-3} has been validated experimentally \cite{berut2012experimental}, and the efficacy of the quantum instrument is equal to zero because the error-syndrome measurement is projective (and hence $\mathcal{N}_{|x}$ is unital). To check Assumption~\ref{assump-2}, we note that
\begin{align}
H(X|E)_{\sigma_{f}}& = H(X)_{\sigma_{f}} - I(X:E)_{\sigma_{f}}\\
& =H(X)_{\sigma_{f}},\\
H(Y|EX)_{\sigma_{f}}& = H(Y|E)_{\sigma_{f}}.    
\end{align}
Furthermore, we have
\begin{align}
H(X)_{\sigma_{f}} & = H(A)_{\tau_{c}},\\ H(Y|E)_{\sigma_{f}} & = H(Y)_{\sigma_{f}} - I(Y:SAB_{h}B_{c})_{\sigma_{f}}\\
& =H(Y)_{\sigma_{2}}-I(Y:SB_{c})_{\sigma_{f}}\\
& =H(Y)_{\sigma_{2}}-I(Y:SA)_{\sigma_{3}},     
\end{align}
which gives us an easy way to check Assumption~\ref{assump-2}, as $H(A)_{\tau_{c}}\ll H(Y)_{\sigma_{2}}-I(Y:SA)_{\sigma_{3}}$. For experimental convenience (controllability, ease of initialization, etc.), we assume that the qubit energies used are in the optical range, i.e., $\hbar \omega \sim 2-3$eV, and we assume the worst case scenario that the cold bath is at room temperature, hence $k_{B}T_{c}\approx 1/40$eV. Consequently, we have for $\hbar \omega/k_{B}T_{c} \sim 100$. This makes $H(A)_{\tau_{c}} \sim 10^{-42}$ (which is a function of the ratio $\hbar \omega /k_{B}T_{c}$), which we verify to be much smaller than $H(Y)_{\sigma_{2}}-I(Y:SA)_{\sigma_{3}}=H(SAY)_{\sigma_{3}}-H(SA)_{\sigma_{3}}\approx 0.166$ for a bit-flip error probability of $p=0.01$, when $a=b=1/\sqrt{2}$. Furthermore, Assumption~\ref{assump-4} is easily verified by checking that $I_{G}=H(Y|X)_{\sigma_{2}}=H(Y)_{\sigma_{2}}\approx 0.166$, hence $|Q^{RSA|Y}_{\text{meas}}|\ge \frac{T_{c}}{T_{h}}Q^{Y}_{\text{meas}}$ is true for any reasonable ratio $T_{c}/T_{h}$ of the QEC engine.

\subsubsection{Phase Damping Channel}

We would like to supplement the above example, which has a classical analogue, with the phase damping noise, described by a single parameter $\lambda$ as:
\begin{equation}
    \sigma=\begin{pmatrix} |a|^{2} & ab^{\star} \\ a^{\star}b & |b|^{2} \end{pmatrix} \rightarrow \begin{pmatrix} |a|^{2} & ab^{\star}e^{-\lambda} \\ a^{\star}be^{-\lambda} & |b|^{2} \end{pmatrix} \;. \label{eqn:phase_damp}
\end{equation}
Phase damping is a purely quantum-mechanical noise that appears in various implementations of qubits, e.g., nitrogen vacancy centers, where there is a large mismatch between transition energies of the qubit and that of the bath spins \cite{zhao2012decoherence, wang2012comparison} (hence flip-flop interactions are suppressed). One possible way to arrive at this noise via a thermal operation Eq.~($\ref{eqn:thermal_op}$) is to consider two systems, where the action of one on the other can be ignored (e.g., when one of the systems is macroscopic). Therefore, the larger system will be in local equilibrium, i.e., in the thermal state $\tau$. Furthermore, the joint Hamiltonian of the two systems can be written as $\mathscr{H}_{\text{eff}}(t)+\mathscr{H}^{B}$, where $\mathscr{H}_{\text{eff}}$ is an effective, stochastic Hamiltonian of the system. Averaging over all realizations of the stochastic noise, one can arrive at Eq.~($\ref{eqn:phase_damp}$), where $\lambda$ is generally a function of time.

It is known in the literature (see, e.g., \cite{nielsen2002quantum}) that a phase damping channel is equivalent to a phase-flip channel $\mathcal{T}(\cdot)=(1-p)(\cdot)+pZ(\cdot)Z$, where $0 \leq p \leq 1$ is the phase-flip probability. This equivalence follows from the unitary freedom in the choice of the Kraus operators of the phase damping channel, as
\begin{multline}
    \left\{\begin{pmatrix} 1 & 0 \\ 0 & \sqrt{1-\lambda} \end{pmatrix}\; , \begin{pmatrix}
    0 & 0 \\ 0 & \sqrt{\lambda}
    \end{pmatrix}\right\} \\ \rightarrow \left\{\sqrt{\mu}\begin{pmatrix} 1 & 0 \\ 0 & 1 \end{pmatrix}\; , \sqrt{1-\mu} \begin{pmatrix} 
    1 & 0 \\ 0 & -1
    \end{pmatrix}\right\}\;,
\end{multline}
where $\mu=(1-\sqrt{1-\lambda})/2$. Consequently, we need only correct the phase-flip channel. This can be done by encoding $S$ using two ancillary qubits $A=A_{1}A_{2}$  in the exact same way as for a bit-flip noise, with the difference that we include an additional step of applying a Hadamard gate to each of the three qubits \cite{nielsen2002quantum}, leading to the encoding
\begin{multline}
    (a|0\rangle^{S}+b|1\rangle^{S})\otimes |0\rangle^{A_{1}}\otimes |0\rangle^{A_{2}} \\ \rightarrow a|+++\rangle^{SA}+b|---\rangle^{SA} \equiv |\Phi\rangle^{SA}\;,
\end{multline}
where $|\pm \rangle =(|0\rangle \pm |1\rangle)/\sqrt{2}$. Therefore, we can continue in the exact same way as for the bit-flip case, but in the $|\pm \rangle$ basis, to arrive at the exact same results.

\section{Comparison with Previous Results}
\label{sec:discussion}

Our main results can be summarized by Eqs.~\eqref{eqn:result_groenewold}, \eqref{eqn: 2nd_law}, \eqref{eqn:three_way}.
At this stage, we would like to show that Eq.~\eqref{eqn: 2nd_law} generalizes a result of Sagawa and Ueda \cite{sagawa2008second}. First, note that the authors do not make the distinction between the main system $S$ and the ancilla system $A$ in their letter. As a consequence, there is no implementation of an initial projective measurement to prepare the ancillary system, and hence the classical register $X$ is not present. More importantly, the authors do not discard the classical registers, which is equivalent to refraining from the erasure of Maxwell demon's memory (a practice that has historically led to many contradictions in thermodynamics, as we discuss in the introduction). This leads to the absence of the important $H(XY|E)_{\sigma_{f}}$ term in the entropic balance leading to Eq.~\eqref{eqn: 2nd_law}. Making these changes in the derivation, we arrive at
\begin{equation}
    \frac{Q_{h}}{k_{B}T_{h}}+\frac{Q_{c}}{k_{B}T_{c}} \leq S_{e}+I_{G}\;, \label{eqn:sagawa}
\end{equation}
which is the main result of Ref.~\cite{sagawa2008second}, but with a few important generalizations: $(i)$ The initial state of the system is arbitrary in our setup (rather than thermal). $(ii)$ The quantum measurement stage is completely general and the information gain from the measurement is quantified by the Groenewold information gain; this is in contrast with the special case of efficient measurements made in Ref.~\cite{sagawa2008second}, where the information gain is quantified by the ``quantum-classical mutual information'', which is a special case of the Groenewold information gain for efficient measurements. 

A important attempt to derive the second law of QTD in the context of QEC has been made in Ref.~\cite{landi2020thermodynamic}. There, the authors arrive at a formula for the total entropy production of a QEC engine. However, we consider this derivation to be too restrictive, for the following reasons: First, the authors considered a system of qubits acted upon independently by generalized amplitude damping noise (which is a special type of thermal noise). Second, the dissipated heat, work, and entropy production during the error-syndrome measurement stage is neglected, whereas we show that it deserves careful consideration. Third, the ancillary system (also a collection of qubits) was initialized and recycled back to its ground state, implying the presence of a cold bath of temperature $T_{c}=0$. The latter implication is not only unphysical, but it also does not allow for the derivation of other forms of the second law that includes arbitrary values of the cold bath temperature, e.g., the Clausius formulation as Eq.~\eqref{eqn: 2nd_law} in our article. Incidentally, it is exactly this issue that is circumvented by initializing the ancillary system from a thermal state rather than its ground state, including the classical register $X$, and conducting the initial measurement stage before proceeding with the encoding, as seen in Figure~\ref{fig:2}. Finally, it is worth mentioning that, similar to Ref.~\cite{sagawa2008second}, the authors do not consider the entropic cost of erasing the memory of the classical registers in their final entropic balance.

\section{Conclusion and Open Questions}

In this article, we take a thermodynamic point of view with respect to general quantum measurements (described in the quantum instrument formalism) and quantum error correction. The former allowed us to arrive at an important bound on the measurement heat in terms of the Groenewold information gain. This bound grants a physical interpretation of negative values of the Groenewold information gain as the directionality of the dissipated heat that is physically allowed in a measurement process. 

Moreover, we consider the thermodynamic approach to quantum error correction, regarding it as a heat engine with a feedback controller. This approach allows us to derive the second-law inequality for a  QEC engine under very general conditions. As a manifestation of the second law in this setting, we show that a trade-off relation exists between the maximum achievable fidelity of the error-correction process and the attainability of super-Carnot efficiencies of a QEC engine.

An open question here is that of time, namely: how long does the QEC process take to be completed? Is there any similar trade-off relation between the time that it takes and the entanglement fidelity? This is directly related to the distinction between operating at maximum efficiency and maximum power for a heat engine. Another interesting question is the manifestation of the second law in the strong coupling regime between the system and the hot bath. It is well known that,
in this regime, there is no agreed upon definition of heat and internal energy \cite{kwon2019three, esposito2010entropy}, and so a consideration of various definitions and their outcomes can be insightful.

We would finally like to mention that our efficiency-fidelity trade-off shares some similarities with thermodynamic uncertainty relations (TURs) \cite{barato2015thermodynamic, pietzonka2016universal, pietzonka2018universal, miller2021thermodynamic}. This can serve as a second point of contact between thermodynamics and error correction.

\begin{acknowledgements}
The authors dedicate this paper to the memory of Jonathan P.~Dowling. The idea that led to this article was inspired by the last course that J.~P.~D.~taught on quantum computing. A.~D.~would like to thank Armen Allahverdyan and Giacomo Guarnieri for very useful discussions. This work was supported by
the U.S. Army Research Office through the U.S. MURI
Grant No.~W911NF-18-1-0218. F.~B.~acknowledges
support from MEXT Quantum Leap Flagship Program
(MEXT Q-LEAP) Grant No.~JPMXS0120319794 and JSPS
KAKENHI Grants 19H04066, 20K03746, and 21H05183. The latex package quantikz \cite{kay2018tutorial} has been used to generate the circuit plot.
\end{acknowledgements}

\bibliographystyle{unsrt}

\bibliography{references}


\appendix


\section{Work and Heat}
\label{sub:work_heat}

In this appendix, we define the thermodynamic quantities of work, internal energy, and heat in a general setting (both weak and strong coupling limits).

Consider two systems $A$ and $B$, and suppose that the collective system $AB$ evolves unitarily in the time interval $[0, t]$ via the joint unitary $V^{AB}_{t}$. From the conservation of the total von Neumann entropy under unitary evolution $H(A,B)_{\sigma_{t}}=H(A,B)_{\sigma_{0}}$, we find the following for the sum of the local entropy changes:
\begin{equation}
    \Delta H(A)+\Delta H(B)=\Delta I(A:B) \ge 0  \; , \label{eqn:total_ent}
\end{equation}
where we have used the definition of conditional entropy. We postpone the standard assumption of $I(A:B)_{\sigma_{0}}=0$ to arrive at more general results \cite{breuer2002theory, perarnau2015extractable,  paz2019dynamics}, since in practice one might not be able to initialize $AB$ in a product form. Next, we relate the entropy change of $B$ with  $D(\sigma^{B}_{t}\Vert\sigma^{B}_{0})$. This can be achieved as follows
\begin{align}
    \Delta H(B)&=-\mathrm{Tr}_{B}\left[\sigma^{B}_{t}\ln{\sigma^{B}_{t}}\right]+\mathrm{Tr}_{B}\left[\sigma^{B}_{0}\ln{\sigma^{B}_{0}}\right]\\
    &=-\mathrm{Tr}_{B}\left[\sigma^{B}_{t}\ln{\sigma^{B}_{t}}\right]+\mathrm{Tr}_{B}\left[\sigma^{B}_{t}\ln{\sigma^{B}_{0}}\right] \notag \\
    & \qquad -\mathrm{Tr}_{B}\left[\sigma^{B}_{t}\ln{\sigma^{B}_{0}}\right]+\mathrm{Tr}_{B}\left[\sigma^{B}_{0}\ln{\sigma^{B}_{0}}\right]\\
    &=\mathrm{Tr}_{B}\left[(\sigma^{B}_{0}-\sigma^{B}_{t})\ln{\sigma^{B}_{0}}\right]-D(\sigma^{B}_{t}\Vert\sigma^{B}_{0})\; . \label{eqn:eff_free_energy}
\end{align}
The operator $\ln \sigma^{B}_{0}$ can be interpreted as an effective unitless Hamiltonian of the system $B$ (i.e., $\beta \mathscr{H}^{B}_{\text{eff}}$ where $\beta$ plays the role of inverse temperature); as a consequence, $\beta^{-1}D(\sigma^{B}_{t}\Vert\sigma^{B}_{0})$ can be interpreted as the change in the free energy from the Gibbs state of $B$ with the effective Hamiltonian. Combining Eqs.~\eqref{eqn:total_ent} and \eqref{eqn:eff_free_energy}, we arrive at
\begin{multline}
    \mathrm{Tr}_{B}\left[(\sigma^{B}_{0}-\sigma^{B}_{t})\ln{\sigma^{B}_{0}}\right]=\\-\Delta H(A)+\Delta I(A:B)+D(\sigma^{B}_{t}\Vert\sigma^{B}_{0}) \label{eqn:heat_to_entropy}\; .
\end{multline}
Now we make an assumption about controllability, thereby operationally distinguishing system $A$ from $B$. Namely, we assume that system $A$ can be controlled by an external parameter with pre-determined dynamics $\left\{\lambda_{t}\right\}_{t\ge 0}$; this is in contrast with system $B$ which is assumed to be uncontrollable \cite{balian2007microphysics}. The assumption of controllability is further extended to the interaction between $A$ and $B$, since one might want to describe a measurement process or a thermalization process that has a coupling phase (with a measurement apparatus or a heat bath respectively) and a decoupling phase. This can be written using the interaction Hamiltonian $\mathscr{V}^{AB}$ that depends on a second external control parameter $\left\{ \nu_{t} \right\}_{t\ge 0}$ with pre-determined dynamics. The controllability assumptions are reflected in the total (time-dependent) Hamiltonian as 
\begin{equation}
    \mathscr{H}^{AB}_{t}=\mathscr{H}^{A}_{\lambda_{t}}+\mathscr{H}^{B}+\mathscr{V}^{AB}_{\nu_{t}}\;,
\end{equation}
with the interaction unitary given by
\begin{equation}
    V^{AB}_{t}=\mathcal{T}\exp\left\{-\frac{i}{\hbar}\int^{t}_{0}d\tau \mathscr{H}^{AB}_{\tau} \right\}\;,
\end{equation}
where $\mathcal{T}$ denotes time ordering. Due to the fact that the collective system $AB$ is isolated, the change in the total energy of the system is equated to (by definition) the work done on $AB$ through the control parameters $\lambda_{t}$ and $\nu_{t}$ as 
\begin{align}
    W^{AB}_{t} & =\int^{t}_{0}d\tau \mathrm{Tr}\left[\sigma^{AB}_{\tau}(\partial_{\tau}\mathscr{H}^{A}_{\tau}+\partial_{\tau}\mathscr{V}^{AB}_{\tau})\right]\\
    & =\mathrm{Tr}\left[\sigma^{AB}_{t}\mathscr{H}^{AB}_{t}\right]-\mathrm{Tr}\left[\sigma^{AB}_{0}\mathscr{H}^{AB}_{0}\right]\;.
\end{align}
In the weak coupling limit, this definition reduces to the local definition $W^{A}_{t}=\mathrm{Tr}\left[\sigma^{A}_{t}\mathscr{H}^{A}_{\lambda_{t}}\right]-\mathrm{Tr}\left[\sigma^{A}_{0}\mathscr{H}^{A}_{\lambda_{0}}\right]$ \cite{balian2007microphysics}, which we have used in the article.

In order to define what we mean by heat, we need to specify what we mean by the internal energy $\mathscr{U}^{A}_{t}$ of the system $A$, at time $t$. To this end, there have been various approaches in the literature: The first is that internal energy has to be a local quantity; i.e., it has to be represented in terms of the reduced density matrix of the system under consideration \cite{kwon2019three}. The second approach is that the definition of internal energy need not be a local quantity and the heat dissipated from $A$ should be equal to (by definition) the heat absorbed by $B$ \cite{esposito2010entropy}, i.e., $Q^{A}_{t}=-Q^{B}_{t}$ at all times. Both of these approaches can be used to derive the laws of thermodynamics and they become identical in the weak coupling limit. In this article, we adopt the second approach by using the same definition in \cite{esposito2010entropy} as
\begin{equation}
    \mathscr{U}^{A}_{t}=\mathrm{Tr}\left[\sigma^{AB}_{t}(\mathscr{H}^{A}_{\lambda_{t}}+\mathscr{H}^{AB}_{\nu_{t}})\right]\;, \label{eqn:int_energy}
\end{equation}
which is clearly non-local in $A$ and reduces to the well known (local) definition $\mathrm{Tr}_{A}\left[\sigma^{A}_{t}H^{A}_{\lambda_{t}}\right]$ in the weak coupling limit. The definition in Eq.~\eqref{eqn:int_energy} is motivated by the intuition that the concept of an \textit{internal} energy local to system $A$ is unclear if its interaction with a second system $B$ is not negligible. From Eq.~\eqref{eqn:int_energy}, we easily get the 1$^{\text{st}}$ law of thermodynamics
\begin{equation}
    \Delta \mathscr{U}^{A}=W^{AB}_{t}+\langle \mathscr{H}^{B}\rangle_{0}-\langle \mathscr{H}^{B}\rangle_{t}\;, \label{eqn:general_1law}
\end{equation}
where $Q^{A}_{t}=-Q_{t}^{B}\coloneqq\langle \mathscr{H}^{B}\rangle_{0}-\langle \mathscr{H}^{B}\rangle_{t}$ is interpreted as the heat absorbed by $A$ whenever $B$ is a heat bath.

We consider the important case when system $B$ is a heat bath. This is the most standard case in the literature, where we assume that system $B$ is initially in a thermal state $\sigma^{B}_{0}=\tau^{B}=\exp(-\beta \mathscr{H}^{B})/Z^{B}$. Then Eq.~\eqref{eqn:heat_to_entropy} gives 
\begin{equation}
    \beta \Delta \langle \mathscr{H}^{B}\rangle=-\Delta H(A)+\Delta I(A:B)+D(\sigma^{B}_{t}\Vert\tau^{B})\;, \label{eqn:landauer} 
\end{equation}
which, combined with the assumption of an initially product state $\sigma^{AB}_{0}=\sigma^{A}_{0}\otimes \tau^{B}$, leads to the well known Landauer bound $\beta Q_{t}^{B} \ge -\Delta H(A)$ \cite{reeb2014improved, allahverdyan2001breakdown, goold2015nonequilibrium}. System $A$ in this context is usually identified with the memory system of the Maxwell demon (not to be confused with the Maxwell demon itself, which typically represents the measurement apparatus). Using  Eqs.~\eqref{eqn:landauer} and \eqref{eqn:general_1law}, we arrive at
\begin{multline}
    \Delta \mathscr{U}^{A}=W^{AB}_{t}+ k_{B}T\Delta H(A)\\-k_{B}T\left(\Delta I(A:B)+D(\sigma^{B}_{t}\Vert\tau^{B})\right)\;.
\end{multline}
When the initial state of $AB$ is of product form, we immediately get the familiar 1$^{\text{st}}$ law inequality of non-equilibrium thermodynamics \cite{balian2007microphysics}
\begin{equation}
    \Delta \mathscr{U}^{A}\leq W^{AB}_{t}+k_{B}T\Delta H(A)\;, \label{eqn:1st_law_inequality}
\end{equation}
where the equality is reached for a thermodynamically ``reversible'' process described by the conditions $I(A:B)_{\sigma_{t}}=0$ and $D(\sigma^{B}_{t}\Vert\tau^{B})=0$. Furthermore, Thompson's formulation of the 2$^{\text{nd}}$ law \cite{allahverdyan2002thomson} follows almost immediately from the definition of free energy of $A$ as \cite{esposito2010entropy}
\begin{equation}
    F^{A}_{t}\coloneqq \mathscr{U}^{A}_{t}-k_{B}TH(A)_{\sigma_{t}}\;,
\end{equation}
which combined with Eq.~\eqref{eqn:1st_law_inequality} yields
\begin{equation}
    W^{AB}_{t} \geq \Delta F^{A}\;.
\end{equation}
These re-derivations of well known thermodynamical relations further justify the definitions of internal energy $\mathscr{U}^{A}$ and thermodynamic work $W^{AB}$ for arbitrary coupling strengths between systems $A$ and $B$ as used in \cite{esposito2010entropy}. 

\end{document}